\numberwithin{equation}{section}
\date{30 December, 2022}
\title[Depth-First Search in a random digraph] 
{Depth-First Search performance in a random digraph with geometric 
outdegree distribution}
\author{Philippe {Jacquet}}
\address{Inria Saclay Ile de France,  France} 
\email{philippe.jacquet@inria.fr}
\author{Svante {Janson}}
\thanks{Supported by the Knut and Alice Wallenberg Foundation}
\address{Department of Mathematics, Uppsala University, PO Box 480,
SE-751~06 Uppsala, Sweden}
\email{svante.janson@math.uu.se}
\newcommand\urladdrx[1]{{\urladdr{\def~{{\tiny$\sim$}}#1}}}
\keywords{depth-first search, depth-first forest, random digraph,
search depth, depth profile, height} 
\thanks{We thank Donald Knuth for posing us questions and conjectures that led to the present paper.}
\newtheorem{theorem}{Theorem}[section]
\newtheorem{lemma}[theorem]{Lemma}
\newtheorem{corollary}[theorem]{Corollary}
\newtheorem{conjecture}[theorem]{Conjecture}
\theoremstyle{definition}
\newtheorem{remark}[theorem]{Remark}
\newenvironment{romanenumerate}[1][-10pt]{
\addtolength{\leftmargini}{#1}\begin{enumerate}
 \renewcommand{\labelenumi}{\textup{(\roman{enumi})}}%
 \renewcommand{\theenumi}{\textup{(\roman{enumi})}}%
 }{\end{enumerate}}
\newcounter{thmenumerate}
\def\td{\widetilde{d}}
\def\tl{\widetilde{\ell}}
\xdef\klockan{\the\count1.0\the\count255}
\xdef\klockan{\the\count1.\the\count255}\fi
\newcommand\ie{\emph{i.e.}}
\newcommand\eg{e.g.\spacefactor=1000}
\newcommand\set[1]{\ensuremath{\{#1\}}}
\newcommand\Bigset[1]{\ensuremath{\Bigl\{#1\Bigr\}}}
\newcommand\xpar[1]{(#1)}
\newcommand\bigpar[1]{\bigl(#1\bigr)}
\newcommand\Bigpar[1]{\Bigl(#1\Bigr)}
\newcommand\lrpar[1]{\left(#1\right)}
\newcommand\bigsqpar[1]{\bigl[#1\bigr]}
\newcommand\sqpar[1]{[#1]}
\newcommand\Bigsqpar[1]{\Bigl[#1\Bigr]}
\newcommand\cpar[1]{\{#1\}}
\newcommand\abs[1]{\lvert#1\rvert}
\newcommand\bigabs[1]{\bigl\lvert#1\bigr\rvert}
\newcommand\Bigabs[1]{\Bigl\lvert#1\Bigr\rvert}
\newcommand\E{\operatorname{\mathbb E{}}}
\newcommand\PP{\operatorname{\mathbb P{}}}
\newcommand\Var{\operatorname{Var}}
\newcommand\Cov{\operatorname{Cov}}
\newcommand\Ge{\operatorname{Ge}}
\newcommand\Gei{\operatorname{Ge}_1}
\newcommand\Po{\operatorname{Po}}
\newcommand\Bin{\operatorname{Bin}}
\newcommand\NegBin{\operatorname{NegBin}}
\newcommand{\tend}{\longrightarrow}
\newcommand\dto{\overset{\mathrm{d}}{\tend}}
\newcommand\pto{\overset{\mathrm{p}}{\tend}}
\newcommand\dd{\,\mathrm{d}}
\newcommand{\refT}[1]{Theorem~\ref{#1}}
\newcommand{\refTs}[1]{Theorems~\ref{#1}}
\newcommand{\refC}[1]{Corollary~\ref{#1}}
\newcommand{\refL}[1]{Lemma~\ref{#1}}
\newcommand{\refR}[1]{Remark~\ref{#1}}
\newcommand{\refS}[1]{Section~\ref{#1}}
\newcommand{\refSS}[1]{Section~\ref{#1}}
\newcommand\ga{\alpha}
\newcommand\gb{\beta}
\newcommand\gD{\Delta}
\newcommand\gf{\varphi}
\newcommand\gl{\lambda}
\newcommand\go{\omega}
\newcommand\gs{\sigma}
\newcommand\gss{\sigma^2}
\newcommand\gth{\theta}
\newcommand\gu{\upsilon}
\newcommand\gU{\Upsilon}
\newcommand\eps{\varepsilon}
\newcommand\gthx{\gth^*}
\newcommand\cF{\mathcal F}
\newcommand\cM{\mathcal M}
\newcommand\bbR{\mathbb{R}}
\newcommand\indic[1]{\boldsymbol1\cpar{#1}}
\newcommand\ceil[1]{\lceil#1\rceil}
\newcommand\floor[1]{\lfloor#1\rfloor}
\newcommand\upto{\nearrow}
\newcommand\downto{\searrow}
\newcommand\qw{^{-1}}
\newcommand\qww{^{-2}}
\newcommand\qq{^{1/2}}
\newcommand\qqw{^{-1/2}}
\newcommand{\sumtn}{\sum_{t=1}^n}
\newcommand\oi{\ensuremath{[0,1]}}
\newcommand\oio{[0,1)}
\newcommand\ntoo{\ensuremath{{n\to\infty}}}
\newcounter{case}
\newcommand\pfitem[1]{\par\smallskip\noindent #1:}
\newcommand\ER{Erd\H os--R\'enyi}
\newcommand\op{o_{\mathrm{p}}}
\newcommand\Oll{O_{L^2}}
\newcommand\Ollnqq{\Oll(n\qq)}
\newcommand\whp{w.h.p.\spacefactor=1000}
\newcommand\tlp{\tl^+}
\newcommand\bd{\overline d}
\newcommand\Tx{T^*}
\newcommand\Mx{M^*}
\newcommand\hpi{\widehat\pi}
\newcommand\bpi{\overline\pi}
\newcommand\hxi{\widehat\xi}
\newcommand\htd{\widehat{\td}}
\newcommand\hd{\widehat{d}}
\newcommand\htl{\widehat{\tl}}
\newcommand\bJ{\overline{J}}
\newcommand\JJ{J}
\newcommand\PQ{\mathbf{P}}
\newcommand\meta{\eta}
\newcommand\dff{depth-first forest}
\newcommand\setn{[n]}
\newcommand\xtau{x}
\newcommand\rhoo{\rho_0}
\newcommand\TTx{\mathbf{T}}
\newcommand\etam{\eta^<}
\newcommand\etal{\eta^=}
\newcommand\etap{\eta^>}
\newcommand\chit{\tau}
\newcommand\chix{\chi}
\newcommand\bmu{\overline{\mu}}
\newcommand\gssx{v}
\begin{document}

\begin{abstract}

We present an analysis of the depth-first search algorithm in a random
digraph model with independent outdegrees having a
geometric distribution.

The results include asymptotic results for the depth profile of vertices, the
height (maximum depth) and average depth, the number of trees in the forest,
the size 
of the largest and second-largest trees, and the numbers of arcs of different
types in the depth-first jungle. 
Most results are first order. For the height we show an asymptotic normal
distribution.

This analysis proposed by Donald Knuth in his next to appear volume of 
\emph{The Art of Computer Programming} 
gives interesting insight in one of the most elegant
and efficient algorithm for graph analysis due to Tarjan.
\end{abstract}

\maketitle

\section{Introduction}
The motivation of this paper is a new section in Donald Knuth's 
\emph{The Art of Computer  Programming} 
\cite{Knuth12A}, which is dedicated to Depth-First Search (DFS) in a
digraph.
Briefly, the DFS starts with an arbitrary vertex, and explores the arcs from
that vertex one by one. When an arc is found leading to a  vertex that
has not been seen before, the DFS explores the arcs from that vertex in the
same way,
in a recursive fashion, before returning to the next arc from its parent.
This eventually yields a tree containing all descendants of the the first
vertex (which is the root of the tree). 
If there still are some unseen vertices, the DFS starts again with one of
them and finds a new tree, and so on until all vertices are found.
We refer to \cite{Knuth12A} for details as well as
for historical notes. 
(See also \ref{stack1}--\ref{stack2} in \refS{Sgen}.)
Note that the digraphs in \cite{Knuth12A} and here are
multi-digraphs, where loops and multiple arcs are allowed. 
(Although in our random model these are few and usually not important.)
The DFS algorithm
generates a spanning forest 
(the \emph{depth-first forest})
in the digraph, with all arcs in the forest directed away from the
roots. Our main purpose is to study the 
properties of the depth-first forest,
starting with a random digraph $G$;
in particular we study
the distribution of the depth of vertices in the depth-first forest.

The random digraph model that we consider 
(following \citet{Knuth12A})
has $n$ vertices and a given
outdegree distribution $\PQ$, which in the main part of the present paper
is a geometric distribution $\Ge(1-p)$ for some fixed $0<p<1$.
The outdegrees (number of outgoing arcs) of the $n$ vertices are independent
random numbers with this distribution.
The endpoint of each arc is uniformly selected at random among the $n$
vertices, independently of all other arcs. (Therefore, an arc can loop back
to the starting vertex, and multiple arcs can occur.)
We consider asymptotics as $n\to\infty$ for a fixed outdegree distribution.

In the present paper, we study the case of a geometric outdegree distribution
in detail; we also (in \refS{SGeo1}) briefly give corresponding results for 
the shifted geometric outdegree distribution $\Gei(1-p)$, and discuss the
similarities and differences between the two cases.
The case of a general outdegree distribution (with  finite variance)
will be studied in a forthecoming paper \cite{SJ364-general}, where we 
use a somewhat different method which allows us to 
extend many (but not all) of the results in the present paper and
obtain similar, but partly weaker, results; 
see also \refS{Sgen}.
One reason for studying the geometric case separately is that its
lack-of-memory property leads to 
interesting features and simplifications not present for
general outdegree distributions; this is seen both in \cite{Knuth12A} 
and in the proofs and results below. 
In particular, 
the depth process studied in \refS{SGeo}
will be a Markov chain, which is the basis of our analysis.

In addition to studying the \dff, we give also 
(in \refSS{SSGeoTypes})
some results on the number of
edges of different types in the
\emph{depth-first jungle};
this is defined in \cite{Knuth12A}
as the original
digraph with arcs classified by 
the DFS algorithm 
into the following five types, see Figure~\ref{fig:forest} for examples:
\begin{itemize}
    \item \emph{loops};
    \item \emph{tree arcs}, the arcs in the resulting depth-first forest;
    \item \emph{back arcs}, 
the arcs which point to an ancestor of the current vertex in the current tree;
    \item \emph{forward arcs}, 
the arcs which point to an already discovered descendant of the current
vertex in the current tree;
    \item \emph{cross arcs}, 
all other arcs (these point to an already discovered vertex which is neither
a descendant nor an ancestor of the current vertex, and might be in another
tree).
\end{itemize}
(See further the exercises in \cite{Knuth12A}.)

\begin{figure}[ht]
\includegraphics[height=7cm]{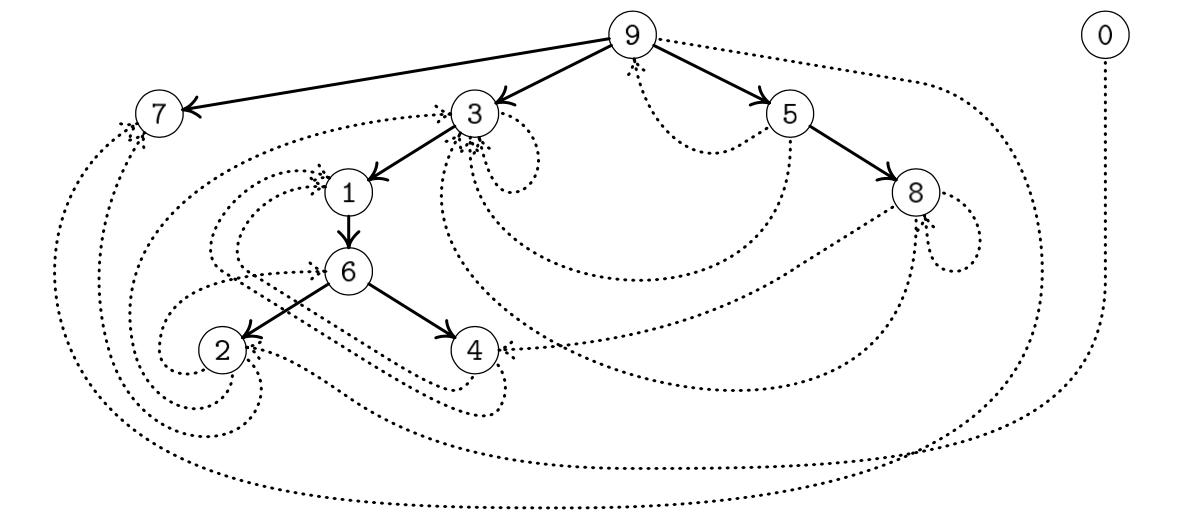}
\caption{Example of a depth-first forest (jungle) from \cite{Knuth12A}, 
by courtesy of Donald Knuth. Tree arcs are solid ({\it e.g.} \textcircled{9}$\to$ \textcircled{3}). For example, 
\textcircled{3}$\dasharrow$ \textcircled{3} is a loop,
\textcircled{2}$\dasharrow$ \textcircled{3} is a back arc, 
\textcircled{9}$\dasharrow$ \textcircled{7} is a forward arc,
\textcircled{8}$\dasharrow$ \textcircled{4} and \textcircled{0}$\dasharrow$ \textcircled{2} are cross arcs. 
}
\label{fig:forest}
\end{figure}

\begin{remark}\label{RER}
Some related results for DFS in an undirected \ER{} graph $G(n,\gl/n)$
are proved by \citet{EnriquezFM} and \citet{DiskinK}, and
DFS in a random \ER{} digraph has been studied
for example in the proof of \cite[Theorem 3]{KrivSud2013}.
These models are closely related to our model with a Poisson
outdegree distribution $\PQ$; they will therefore
be further discussed in \cite{SJ364-general}. 
\end{remark}

\begin{remark}\label{Rdep}
We consider only the case of a fixed outdegree
distribution $\PQ$. The results can be extended to distributions $\PQ_n$
depending on $n$, under suitable conditions. This is particularly
interesting in the critical case, with expectations $\gl_n\to1$;
however, this is out of the scope of the present paper.
\end{remark}

The main results for a geometric out-degree distribution
are stated and proved in \refS{SGeo}.
We analyze the process $d(t)$ of depths of the vertices, in the order they
are found by the DFS.  For a geometric outdegree distribution (but not in
general), $d(t)$ is a Markov chain, and we find its first-order
limit by a martingale argument; 
moreover, we  show Gaussian fluctuations. 
We also find results for the numbers of different types of arcs defined
above; this
includes verifying some conjectures from previous versions of
\cite{Knuth12A}.

In \refS{SGeo1} we study briefly the case of a shifted geometric outdegree
distribution. The same method as in \refS{SGeo} 
works in this case too, but the explicit
results are somewhat different. One motivation for this section is to show
that some of the relations found in \refS{SGeo} for a geometric outdegree
distribution do not hold for arbitrary distributions.

We end in \refS{Sgen}
with some comments on the case of general outdegree distributions.


\subsection{Some notation}
We denote the given outdegree distribution by $\PQ$.
Recall that our standing assumption is that the outdegrees of the vertices are 
i.i.d.\ (independent and identically distributed).

The mean outdegree, \ie, the expectation  of $\PQ$, is denoted by
$\gl$. 
In analogy with branching processes, we say that the 
random digraph 
is
\emph{subcritical} if $\gl<1$,
\emph{critical} if $\gl=1$, 
and
\emph{supercritical} if $\gl>1$.

As usual, \whp{} means \emph{with high probability}, \ie, with probability
$1-o(1)$ as \ntoo.
We use $\pto$ for convergence in probability, and $\dto$ for convergence in
distribution of random variables.

Moreover,
let $(a_n)$ be a sequence of positive numbers, and $X_n$ a sequence of
random variables. 
We write $X_n=\op(a_n)$ if, as $\ntoo$, 
$X_n/a_n\pto0$,
\ie, if for every $\eps>0$, we have
$\PP(|X_n|>\eps a_n)\to0$.
Note that this is equivalent to the existence of a sequence $\eps_n\to0$ such
that 
$\PP(|X_n|>\eps_n a_n)\to0$, or in other words $|X_n|\le\eps_na_n$ w.h.p.
(This is sometimes denoted ``$X_n=o(a_n)$ \whp'', but we will not use this 
notation.)

Furthermore, 
$X_n=\Oll(a_n)$ means $\E\bigsqpar{|X_n/a_n|^2}=O(1)$.
Note that 
$X_n=\Oll(a_n)$ implies 
$X_n=\op(\go_na_n)$,
for any sequence
$\omega_n\to\infty$.
Note also that $X_n=\Oll(a_n)$ implies $\E X_n=O(a_n)$;
thus error terms of this type implies immediately estimates for
expectations and second moments.
In particular, for the most common case below,
$X_n=\Ollnqq$ is equivalent to $\E X_n=O(n\qq)$ and 
$\Var X_n= O(n)$. 

$\Ge(1-p)$ denotes the geometric distribution on \set{0,1,\dots};
thus $\eta\sim\Ge(1-p)$ means that $\eta$ is a random variable with
$\PP(\eta=k)=p^k(1-p)$, $k\ge0$.
Similarly, $\Gei(1-p)$ denotes the shifted geometric distribution on
\set{1,2,\dots};
thus $\eta\sim\Gei(1-p)$ means
$\PP(\eta=k)=p^{k-1}(1-p)$, $k\ge1$.
$\Po(\gl)$ denotes a Poisson distribution with mean $\gl$.

We define $\rhoo(x)$, for $x\ge0$,
as the largest solution in $[0,1)$ to
\begin{align}\label{rhoo}
1-\rhoo= e^{-x\rhoo}
.\end{align}
As is well known, $\rhoo(x)$
is the survival probability of a Galton--Watson process with 
a Poisson offspring distribution $\Po(x)$ with mean $x$. 
We have $\rhoo(x)=0$ for $x\le1$ and $0<\rhoo(x)<1$ for $x>1$.
(See \eg{} \cite[Theorem I.5.1]{AN}.) 

For a real number $x$, we write $x^+:=\max\set{x,0}$.
$\setn:=\set{1,\dots,n}$.
All logarithms are natural. 
$C$ and $c$ are sometimes used for  positive constants.

\begin{remark}
We  state many results with error estimates in $L^2$, which means
estimates on the second moment; we conjecture that
the results extend to higher moment and estimates in $L^p$ for any
$p<\infty$,
but we have not pursued this.
\end{remark}


\section{Depth analysis with geometric outdegree distribution} \label{SGeo}
In this section we assume that the outdegree distribution is geometric
$\Ge(1-p)$ for some fixed $0<p<1$, and thus has mean 
\begin{align}\label{glGeo}
\gl:=\frac{p}{1-p}.
\end{align}

When doing the DFS on a random digraph of the type studied in this paper,
it is natural to reveal the outdegree of a vertex as soon as we find it.
(See \ref{stack1}--\ref{stack2} in \refS{Sgen}.)
However, for a geometric outdegree distribution, 
because of its lack-of-memory property,
we do not have to immediately reveal the outdegree when we find a new vertex
$v$.
Instead, we only check whether there is at least one
outgoing arc (probability $p$),  
and if so, we find its endpoint and explore
this endpoint if it has not already been visited; eventually, we return to
$v$, and then we check whether there is another outgoing arc (again
probability $p$, 
by the lack-of-memory property), 
and so on.
 This will yield the important Markov property in the construction in the
next subsection.

In the following, by a \emph{future arc} from some vertex, we mean an arc
that at the current time has not yet been seen by the DFS.

\subsection{Depth Markov chain}
Our aim is to track the evolution of the search depth as a function of the
number $t$ of discovered vertices. Let $v_t$ be the $t$-th vertex
discovered by  the DFS ($t=1,\dots,n$), and let $d(t)$ be the depth of $v_t$
in the resulting depth-first forest, \ie, the number of tree edges that connect the root of the current tree to  $v_t$.
The first found vertex $v_1$ is a root, and thus $d(1)=0$.

The quantity $d(t)$ follows a Markov chain with transitions ($1\le t<n$):
\begin{romanenumerate}
    \item {$d(t+1)=d(t)+1$}.
\\This happens 
if, for some $k\ge1$, $v_t$ has at least $k$ outgoing arcs, 
the first $k-1$ arcs lead to vertices already visited, and the $k$th arc leads
to a new vertex (which then becomes $v_{t+1}$). The probability of this is
\begin{align}\label{new}
  \sum_{k=1}^\infty p^k \Bigpar{\frac{t}{n}}^{k-1}\Bigpar{1-\frac{t}{n}}
=\frac{(1-t/n)p}{1-pt/n}.
\end{align}

\item{$d(t+1)=d(t)$, assuming $d(t)>0$}.
\\This holds if all arcs from $v_t$ lead to already visited vertices, \ie,
(i) does not happen, and furthermore, the parent of $v_t$ has at least one
future arc leading to an unvisited vertex.
These two events are independent. 
Moreover, by the lack-of-memory property, 
the number of future arcs from the parent of $v_t$ 
has also the distribution $\Ge(1-p)$. Hence, 
the probability that one of these future arcs leads to an unvisited vertex 
equals the probability in 
\eqref{new}. The probability of (ii) is thus
\begin{align}\label{ii}
\Bigpar{1-\frac{(1-t/n)p}{1-pt/n}}\frac{(1-t/n)p}{1-pt/n}.
\end{align}

\item {$d(t+1)=d(t)-\ell$, assuming $d(t)>\ell\ge1$}.
\\This happens if all arcs from $v_t$ lead to already visited vertices,
and so do all future arcs from the $\ell$ nearest ancestors of $v_t$,
while the $(\ell+1)$th ancestor has at least one future arc leading to an
unvisited vertex. 
The argument in (ii) generalizes and shows that this has probability
\begin{align}\label{iii}
\Bigpar{1-\frac{(1-t/n)p}{1-pt/n}}^{\ell+1}\frac{(1-t/n)p}{1-pt/n}.
\end{align}

\item {$d(t+1)=d(t)-\ell$, assuming $d(t)=\ell\ge0$}.
\\By the same argument as in (ii) and (iii), except that  the $(\ell+1)$th
ancestor does not exist and we ignore it, we obtain the probability
\begin{align}\label{iv}
\Bigpar{1-\frac{(1-t/n)p}{1-pt/n}}^{\ell+1}. 
\end{align}
\end{romanenumerate}
Note that (iv) is the case when $d(t+1)=0$ and thus $v_{t+1}$ is the root of
a new tree in the depth-first forest.

We can summarize (i)--(iv) in the formula
\begin{align}\label{dt+}
  d(t+1) = \bigpar{d(t) +1-\xi_t}^+,
\end{align}
where 
$\xi_t$ is a random variable, independent of
the history, with the 
distribution
\begin{align}\label{xi}
  \PP(\xi_t=k)=(1-\pi_t)^k\pi_t,
\quad k\ge0,
\end{align}
where
\begin{align}\label{xipi}
\pi_t:=
\frac{(1-t/n)p}{1-pt/n}
=1-\frac{1-p}{1-pt/n}.
\end{align}
In other words, $\xi_t$ has the
geometric distribution $\Ge(\pi_t)$.
Define 
\begin{align}
  \label{td}
\td(t):=\sum_{i=1}^{t-1}(1-\xi_i),
\end{align}
and note that \eqref{td} is a sum of independent random variables.
Then $d(t)$ can be recovered from the simpler process $\td(t)$ as follows.

\begin{lemma}\label{Ldtd}
We have
  \begin{align}\label{dtd}
  d(t)=\td(t) - \min_{1\le j\le t} \td(j),
\qquad 1\le t \le n
.\end{align}
\end{lemma}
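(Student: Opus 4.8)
The plan is to prove \eqref{dtd} by induction on $t$, using the recursion \eqref{dt+}. For $t=1$ the sum in \eqref{td} is empty, so $\td(1)=0$, and the minimum over $j=1$ is also $0$, giving $d(1)=0$, which matches the stated initial condition; this is the base case.

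For the inductive step, suppose \eqref{dtd} holds at time $t$. Write $m(t):=\min_{1\le j\le t}\td(j)$, so the inductive hypothesis reads $d(t)=\td(t)-m(t)$. By \eqref{td} we have $\td(t+1)=\td(t)+(1-\xi_t)$, and by \eqref{dt+} we have $d(t+1)=\bigpar{d(t)+1-\xi_t}^+=\bigpar{\td(t)-m(t)+1-\xi_t}^+=\bigpar{\td(t+1)-m(t)}^+$. It remains to check that $\bigpar{\td(t+1)-m(t)}^+=\td(t+1)-m(t+1)$. Since $m(t+1)=\min\set{m(t),\td(t+1)}\le m(t)$, the right-hand side is nonnegative, so it equals its own positive part, and we must show $\bigpar{\td(t+1)-m(t)}^+=\td(t+1)-\min\set{m(t),\td(t+1)}$. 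This is a purely arithmetic identity: if $\td(t+1)\ge m(t)$, both sides equal $\td(t+1)-m(t)$; if $\td(t+1)<m(t)$, the left side is $0$ and the right side is $\td(t+1)-\td(t+1)=0$. This completes the induction.

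There is no real obstacle here — the only thing to be careful about is the bookkeeping of the index ranges in the minimum (that $m(t+1)$ genuinely includes the new term $\td(t+1)$ and nothing beyond it) and the fact that \eqref{dt+} is an exact pathwise identity, not merely a statement about distributions, so the argument can be carried out sample-path by sample-path. One could alternatively phrase the whole thing as the standard reflection/Skorokhod-type formula: the solution of the recursion $d(t+1)=(d(t)+\zeta_t)^+$ with $d(1)=0$ and increments $\zeta_t=1-\xi_t$ is exactly $S(t)-\min_{j\le t}S(j)$ where $S(t)=\sum_{i<t}\zeta_i$, and then \eqref{dtd} is just this formula applied with $S=\td$. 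Either presentation is short.
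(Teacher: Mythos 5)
Your proof is correct and follows essentially the same route as the paper's: induction on $t$ via the recursion \eqref{dt+}, with the same case split on whether $\td(t+1)$ drops below the running minimum (you merely package it as the arithmetic identity $(\td(t+1)-m(t))^+=\td(t+1)-\min\{m(t),\td(t+1)\}$). No gaps.
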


\begin{proof}
We use induction on $t$. Evidently, \eqref{dtd} holds for $t=1$ since
$d(1)=\td(1)=0$.

Suppose that   \eqref{dtd} holds for some $t<n$.
Then \eqref{td} yields
\begin{align}\label{adv}
    \td(t+1)=\td(t)+1-\xi_t = d(t)+1-\xi_t+\min_{1\le j\le t} \td(j).
\end{align}
If $d(t)+1-\xi_t\ge0$, then \eqref{adv} shows that
$\td(t+1)\ge\min_{1\le j\le t} \td(j)$, and thus
$\min_{1\le j\le t+1} \td(j)=\min_{1\le j\le t} \td(j)$; furthermore,
$d(t+1)=d(t)+1-\xi_t$ by \eqref{dt+}, and it follows that \eqref{dtd} holds
for $t+1$.

On the other hand, if $d(t)+1-\xi_t<0$, then \eqref{adv} shows that
$\td(t+1)<\min_{1\le j\le t} \td(j)$, and thus
$\min_{1\le j\le t+1} \td(j)=\td(t+1)$. In this case, $d(t+1)=0$ by
\eqref{dt+},
and it follows that \eqref{dtd} holds for $t+1$ in this case too.
\end{proof}

\begin{remark}
Similar formulas have been used for other, related, 
problems with random graphs and trees, where trees have been coded as walks,
see for example \cite[Section 1.3]{Aldous1997}. 
Note that in our case, unlike e.g.\ \cite{Aldous1997},
$\td(t)$ may have negative jumps of arbitrary size.
\end{remark}


\subsection{Main result for depth analysis}

Note first that \eqref{xipi} implies that,
using $\gl=p/(1-p)$,
\begin{align}\label{mut}
  \mu_t:=\E\xi_t=\frac{1-\pi_t}{\pi_t}
=\frac{1-p}{p(1-t/n)}
=\frac{1}{\gl(1-t/n)}
.\end{align}
Hence, 
\eqref{td} implies that the expectation of $\td(t)$ is
\begin{align}\label{Etd1}
  \E \bigsqpar{\td(t)} &
= \sum_{i=1}^{t-1} (1-\E\xi_i)
= \sum_{i=1}^{t-1} (1-\mu_i)
= \sum_{i=1}^{t-1} \Bigpar{1-\frac{1}{\gl(1-i/n)}}
.\end{align}
Let $\gth:=t/n$. We fix a $\gthx<1$ and obtain that,
uniformly for $\gth\le\gthx$, 
\begin{align}\label{Etd}
    \E \bigsqpar{\td(t)} 
= \int_0^t\Bigpar{1-\frac{1}{\gl(1- x/n)}} \dd x + O(1)
=n\tl(\gth)+O(1),
\end{align}
where
\begin{align}\label{tl}
  \tl(\gth)
:=
\int_0^\gth\Bigpar{1-\frac{1}{\gl(1-\xtau)}} \dd \xtau 
=\theta+\gl\qw\log(1-\theta)
.\end{align}
Note that the derivative $\tl'(\gth)=1-\gl\qw/(1-\gth)$ is (strictly)
decreasing on $(0,1)$, \ie, $\tl$ is concave.
Moreover,
if $\gl>1$ (\ie, $p>\frac12$)
(the {supercritical} case),
then $\tl'(0)>0$, and \eqref{tl} shows that 
$\tl(\gth)$ is positive and increasing for $\gth<\gth_0:=1-\gl\qw=(2p-1)/p$.
After the maximum at $\gth_0$, 
$\tl(\gth)$ decreases and tends to $-\infty$ as $\gth\upto1$. Hence, there
exists a $\gth_0<\gth_1<1$ such that $\tl(\gth_1)=0$; we then have $\tl(\gth)>0$
for $0<\gth<\gth_1$ and $\tl(\gth)<0$ for $\gth>\gth_1$.
We will see that in this case the depth-first forest \whp{}
contains a giant tree,
of order and height both linear in $n$, while all other trees are small.

On the other hand, if $\gl\le1$ (\ie, $p\le\frac12$)
(the {subcritical} and {critical} cases), 
then $\tl'(0)\le0$ and $\tl(\gth)$ is negative and
decreasing
for all $\gth\in(0,1)$. 
In this case, we define $\gth_0:=\gth_1:=0$ and note that the properties
just stated for $\tl$ still hold (rather trivially).
We will see that in this case
\whp{} all trees in the depth-first forest are small.

Note that in all cases,
\begin{align}\label{gth0}
  \gth_0:=\bigpar{1-\gl\qw}^+
=
  \begin{cases}
    1-\gl\qw,& \gl>1,
\\
0,&\gl\le1,
  \end{cases}
\end{align}
and that
$\gth_1$ is the largest solution in $[0,1)$ to
\begin{align}\label{gth1}
\log(1-\theta_1)=-\gl\theta_1
.\end{align}

\begin{remark}\label{Rgth1}
The equation \eqref{gth1} may also be written
$1-\gth_1=\exp(-\gl \gth_1)$, 
which shows that 
\begin{align}\label{gth1rhoo}
\gth_1=\rhoo(\gl),   
\end{align}
the survival
probability of a Galton--Watson process with $\Po(\gl)$
offspring distribution defined in \eqref{rhoo}.
\end{remark}

We define $\tlp(\gth):=[\tl(\gth)]^+$. Thus, by \eqref{tl} and the comments
above, 
\begin{align}\label{tlp}
  \tlp(\gth)=
  \begin{cases}
\theta+\gl\qw\log(1-\theta)
,& 0\le\gth\le\gth_1,
\\
0,&\gth_1\le\gth\le1.
  \end{cases}
\end{align}

We can now state one of our main results.

\begin{theorem}\label{T1}
We have
\begin{equation}\label{t1}
  \max_{1\le t\le n} \bigabs{d(t)- n\tlp(t/n)}=\Oll(n\qq)
.\end{equation}
\end{theorem}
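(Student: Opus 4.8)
The plan is to prove \eqref{t1} by exploiting the representation in \refL{Ldtd}, which reduces control of $d(t)$ to control of the partial-sum process $\td(t)$ and its running minimum. Indeed, since $x\mapsto x^+$ is $1$-Lipschitz and $d(t)=\td(t)-\min_{j\le t}\td(j)$, it suffices to show
\begin{align*}
\max_{1\le t\le n}\Bigabs{\td(t)-\min_{1\le j\le t}\td(j)-n\tlp(t/n)}=\Oll(n\qq).
\end{align*}
The natural first step is to split this into a deterministic approximation and a fluctuation bound. For the deterministic part, write $\td(t)=\E[\td(t)]+M(t)$ where $M(t):=\td(t)-\E[\td(t)]=\sum_{i=1}^{t-1}(\mu_i-\xi_i)$ is a mean-zero martingale. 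By \eqref{Etd1}--\eqref{Etd}, $\E[\td(t)]=n\tl(t/n)+O(1)$ uniformly for $t/n\le\gthx$; one must also handle the range $t/n$ close to $1$, where $\E[\td(t)]\to-\infty$, but there the target $\tlp$ is $0$ and $d(t)\ge0$ forces $d(t)=\Oll(n\qq)$ once the minimum has ``caught up,'' so the interesting range is $t/n\le\gth_1$ plus a little beyond.

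The heart of the argument is a uniform martingale/maximal-inequality bound: I would show $\E\bigsqpar{\max_{1\le t\le n}|M(t)|^2}=O(n)$, i.e.\ $\max_t|M(t)|=\Oll(n\qq)$. This follows from Doob's $L^2$ inequality together with a variance estimate $\Var\td(n)=\sum_{i=1}^{n-1}\Var\xi_i=O(n)$ — except that $\Var\xi_i=(1-\pi_i)/\pi_i^2$ blows up like $(1-i/n)^{-2}$ as $i\to n$, so $\sum_i\Var\xi_i$ is \emph{not} $O(n)$ near the right endpoint. The remedy is to stop the martingale at $t=\gthx n$ for a fixed $\gthx\in(\gth_1,1)$ (or at a slowly growing truncation level): on $[1,\gthx n]$ the increments have bounded variance, so $\max_{t\le\gthx n}|M(t)|=\Oll(n\qq)$; for $t>\gthx n$ one argues separately that both $d(t)$ and $n\tlp(t/n)=0$ are small, using that the running minimum $\min_{j\le t}\td(j)$ is $\le\td(\gth_1 n)$ which is $\Oll(n\qq)$-close to $n\tl(\gth_1)=0$, while $\td(t)$ itself, even if it has large negative excursions from big jumps $\xi_i$, only makes $d(t)$ smaller. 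I would also note that the large downward jumps of $\td$ are harmless precisely because of the $x^+$ in $d(t+1)=(d(t)+1-\xi_t)^+$: a jump that would send $\td$ far below its past minimum simply starts a new tree at depth $0$.

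Granting $\max_{t}|M(t)|=\Oll(n\qq)$, the conclusion assembles as follows. For $t\le\gth_1 n$: $\min_{j\le t}\td(j)=\min_{j\le t}\bigpar{n\tl(j/n)+O(1)+M(j)}$; since $\tl$ is increasing on $[0,\gth_0]$ and decreasing on $[\gth_0,\gth_1]$ with $\tl(0)=0$, the deterministic minimum over $j\le t$ is $n\tl(0)=0$ (attained near $j=1$) up to the $O(1)$ and the $\Oll(n\qq)$ martingale term, so $d(t)=\td(t)-\min_{j\le t}\td(j)=n\tl(t/n)+\Oll(n\qq)=n\tlp(t/n)+\Oll(n\qq)$. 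For $\gth_1 n\le t\le n$: here $n\tlp(t/n)=0$, and $0\le d(t)=\td(t)-\min_{j\le t}\td(j)\le\td(t)-\min_{j\le \gth_1 n}\td(j)$, whose deterministic part is $n\tl(t/n)-n\tl(\gth_1)\cdot\indic{\cdot}+\cdots\le 0+\Oll(n\qq)$ since $\tl(t/n)\le\tl(\gth_1)=0$; combined with $d(t)\ge0$ this gives $d(t)=\Oll(n\qq)$. Taking the max over both ranges (and absorbing the finitely many $O(1)$ and $\Oll(n\qq)$ terms) yields \eqref{t1}. \textbf{The main obstacle} is the unbounded-variance issue near $t=n$: the increments $\xi_t$ are $\Ge(\pi_t)$ with $\pi_t\to 1-p$... wait — $\pi_t\downarrow 0$ as $t\to n$, so $\xi_t$ has heavy tails and infinite-in-the-limit variance there, and one must be careful that the martingale maximal inequality is applied only on a region of bounded increments, with the endpoint region dispatched by the softer monotonicity-of-$\tl$ plus nonnegativity-of-$d$ argument rather than by moment control.
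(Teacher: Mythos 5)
Your treatment of the main range $t\le\gthx n$ is essentially the paper's own argument: the representation of \refL{Ldtd}, Doob's $L^2$ maximal inequality for the martingale $\td(t)-\E\td(t)$ on the region where the increment variances are uniformly bounded, and the deterministic identity $\min_{1\le j\le t}\tl(j/n)=\tl(t/n)-\tlp(t/n)+O(1/n)$, which together give $\max_{t\le \gthx n}\bigabs{d(t)-n\tlp(t/n)}\le 2\Mx+O(1)=\Ollnqq$. You also correctly identify the real obstacle: $\pi_t\downto0$ and $\Var\xi_t\upto\infty$ as $t\upto n$, so the Doob bound cannot be run up to $t=n$ (indeed $\sum_{i<n}\Var\xi_i$ is of order $n^2$, not $n$).

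However, your dispatch of the endpoint region $\gthx n<t\le n$ has a genuine gap. The inequality you use, $d(t)=\td(t)-\min_{1\le j\le t}\td(j)\le\td(t)-\min_{1\le j\le \gth_1 n}\td(j)$, is reversed: enlarging the index set of the minimum can only decrease it, so the right-hand side is a \emph{lower} bound for $d(t)$, not an upper bound. More importantly, the remark that large negative jumps of $\td$ ``only make $d(t)$ smaller'' does not address the actual risk, which is that $\td$ climbs back up above its running minimum during $(\gthx n,n]$: one has $d(t)=\max_{1\le j\le t}\sum_{i=j}^{t-1}(1-\xi_i)$, and partial sums starting at $j>\gthx n$ lie exactly in the range where your $L^2$ control is absent. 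The drift $1-\mu_i$ is indeed strictly negative there, but converting this into a uniform $\Ollnqq$ bound on the maximum requires a probabilistic argument, not just the sign of $\tl'$ together with $d\ge0$. The paper's missing ingredient is a monotone coupling: set $\hpi_t:=\max\set{\pi_t,\pi_{\floor{n\gthx}}}$ and couple $\hxi_t\sim\Ge(\hpi_t)$ with $\hxi_t\le\xi_t$; the dominating process $\hd(t)\ge d(t)$ then has uniformly bounded increment variances and still negative drift beyond $\gthx$ (since $\gthx>\gth_1$), so the same Doob argument applies to it and yields $\max_{\gthx n<t\le n}\hd(t)=\Ollnqq$, after which nonnegativity of $d$ finishes. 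Some one-sided truncation or domination of this kind (a Chernoff-type bound exploiting $1-\xi_i\le1$ could control upper excursions in probability, but extra work would be needed to get the stated $L^2$ bound) is necessary to close your argument.
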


\begin{proof}
Since \eqref{td} is a sum of independent random variables, 
$\td(t)-\E\td(t)$ 
($t=1,\dots,n$) 
is a martingale,
and Doob's inequality 
\cite[Theorem 10.9.4]{Gut}
yields, for all $T\le n$,
\begin{align}\label{emma}
    \E \bigsqpar{\max_{t\le T}|\td(t)-\E\td(t)|^2}
\le 4 \E \bigsqpar{|\td(T)-\E\td(T)|^2}
=4 \sum_{i=1}^{T-1} \Var(\xi_i).
\end{align}

As above, fix $\gthx<1$, and assume, as we may, that $\gthx>\gth_1$.
Let $\Tx:=\floor{n\gthx}$, and consider first $t\le\Tx$.
For $i<\Tx$, we have $\Var \xi_i = O(1)$, and thus, for $T=\Tx$, 
the sum in \eqref{emma}
is $O(\Tx)=O(n)$.
Consequently, \eqref{emma} yields
\begin{align}
  \max_{t\le\Tx}\bigabs{\td(t)-\E\td(t)} =\Ollnqq
.\end{align}
Hence, by \eqref{Etd},
\begin{align}\label{m*}
\Mx:=
  \max_{t\le\Tx}\bigabs{\td(t)-n\tl(t/n)} =\Ollnqq
.\end{align}
(Note that $\Tx$ and $\Mx$ depend on the choice of $\gthx$.)
For $t\le\Tx$, the definition of $\Mx$ in \eqref{m*} implies
\begin{align}\label{gal}
  \Bigabs{\min_{1\le j\le t}\td(j)- n\min_{1\le j\le t}\tl(j/n)}\le\Mx.
\end{align}
Moreover,
for $t/n\le \gth_1$, we have 
$\min_{1\le j\le t}\tl(j/n)=O(1/n)$,
while
for $t/n\ge \gth_1$, we have 
$\min_{1\le j\le t}\tl(j/n)=\tl(t/n)$.
Hence, for all $t\le \Tx$,
\begin{align}
  \min_{1\le j\le t}\tl(j/n)=\tl(t/n)-\tlp(t/n) + O(1/n),
\end{align}
and thus, by \eqref{gal},
\begin{align}\label{ew}
  \Bigabs{\min_{1\le j\le t}\td(j)- n\tl(t/n)+n\tlp(t/n)}\le\Mx+O(1/n).
\end{align}
Finally, by \eqref{dtd}, \eqref{m*} and \eqref{ew},
\begin{align}\label{jb}
    \bigabs{d(t)- n\tlp(t/n)}\le2\Mx+O(1/n).
\end{align}
This holds uniformly for $t\le\Tx$, and thus, by \eqref{m*}, 
\begin{equation}\label{jesp}
  \max_{1\le t\le\Tx} \bigabs{d(t)- n\tlp(t/n)}=\Oll(n\qq)
.\end{equation}

It remains to consider $\Tx<t\le n$. Then the argument above does not quite
work, because $\pi_t\downto0$ and thus $\Var\xi_t\upto\infty$ as $t\upto n$.
We therefore modify $\xi_t$.
We define
$\hpi_t:=\max\set{\pi_t,\pi_{\Tx}}$; thus $\hpi_t=\pi_t$ for $t\le\Tx$ and
$\hpi_t>\pi_t$ for $t>\Tx$.
We may then define independent random variables $\hxi_t$ such that
$\hxi_t\sim\Ge(\hpi_t)$ and $\hxi_t\le\xi_t$ for all $t< n$.
(Thus, $\hxi_t=\xi_t$ for $t\le\Tx$.)

In analogy with \eqref{td}--\eqref{dtd}, we further define
\begin{align}\label{htd}
  \htd(t)&:=\sum_{i=1}^{t-1}\bigpar{1-\hxi_i},
\\\label{hd}
\hd(t)&:=\htd(t)-\min_{1\le j\le t}\htd(j)
=\max_{1\le j\le t}\sum_{i=j}^{t-1}\bigpar{1-\hxi_i}.
\end{align}
Since $\hxi_i\le \xi_i$, \eqref{hd} implies that
$\hd(t)\ge d(t)$ for all $t$.

We have $\Var\bigsqpar{\hxi_t}=O(1)$, uniformly for all $t<n$, and thus the
argument above yields
\begin{equation}\label{kasp}
  \max_{1\le t\le n} \bigabs{\hd(t)- n[\htl(t/n)]^+}=\Oll(n\qq)
,\end{equation}
where
\begin{align}\label{htl}
  \htl(\gth)
:=
\int_0^\gth
\min\Bigset{\Bigpar{1-\frac{1}{\gl(1-\xtau)}} ,\Bigpar{1-\frac{1}{\gl(1-\gthx)}}} 
\dd \xtau 
.\end{align}
We have $\htl(\gth)=\tl(\gth)$ for $\gth\le\gthx$, and
for $\gth\ge\gthx$, $\htl(\gth)$ is negative and decreasing 
(since $\gthx>\gth_1$). 
Hence, $[\htl(\gth)]^+=\tlp(\gth)$ for all $0<\gth\le1$.
In particular, $[\htl(\gth)]^+=\tlp(\gth)=0$ for all $\gth\ge\gthx$,
and \eqref{kasp} implies
\begin{align}
  \max_{\Tx<t\le n} \hd(t) = \Ollnqq.
\end{align}
Recalling $0\le d(t)\le\hd(t)$, we thus have
\begin{align}
  \max_{\Tx<t\le n} \bigabs{d(t) -n\tlp(t/n)}
=  \max_{\Tx<t\le n} d(t) 
\le \max_{\Tx<t\le n} \hd(t) 
= \Ollnqq,
\end{align}
which completes the proof.
\end{proof}

\begin{corollary}\label{CH}
The height $\gU$ of the depth-first forest is
\begin{equation}\label{gU}
\gU:=  
\max_{1\le t\le n}d(t)= \gu n+\Oll(n\qq),
\end{equation}
where
\begin{align}\label{gu}
\gu=  \gu(p):=  \tlp(\gth_0)
=
  \begin{cases}
    0, & 0<\gl\le1,
\\
1-\gl\qw-\gl\qw\log\gl,
& \gl>1  
.  \end{cases}
\end{align}
\end{corollary}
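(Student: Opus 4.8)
The plan is to read off \refC{CH} from \refT{T1}: since $\gU=\max_{1\le t\le n}d(t)$, we only need to replace $d(t)$ by the deterministic profile $n\tlp(t/n)$ inside the maximum and then identify $\max_{0\le\gth\le1}\tlp(\gth)$ together with where it is attained.

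First I would use the elementary inequality $\bigabs{\max_t a_t-\max_t b_t}\le\max_t\bigabs{a_t-b_t}$, valid for any finite families of reals, applied with $a_t=d(t)$ and $b_t=n\tlp(t/n)$; this gives $\bigabs{\gU-\max_{1\le t\le n}n\tlp(t/n)}\le\max_{1\le t\le n}\bigabs{d(t)-n\tlp(t/n)}$, and the right-hand side is $\Oll(n\qq)$ by \refT{T1}, so (taking second moments) $\gU-\max_{1\le t\le n}n\tlp(t/n)=\Oll(n\qq)$. Next I would evaluate $\max_{1\le t\le n}\tlp(t/n)$. By \eqref{tlp} and the discussion preceding \refT{T1}, $\tlp$ is continuous on $\oi$, vanishes on $[\gth_1,1]$, and on $[0,\gth_1]$ equals the concave function $\tl$, which increases on $[0,\gth_0]$ and decreases on $[\gth_0,\gth_1]$; hence $\tlp$ attains its maximum over $\oi$ at $\gth_0$, with value $\tlp(\gth_0)=\gu$, and $\tlp$ is Lipschitz on $\oi$ with a constant $C_p$ depending only on $p$ (its derivative is bounded on $[0,\gth_1]$ since $\gth_1<1$, and it is constant on $[\gth_1,1]$). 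Since $\gth_0\in\oio$, some grid point $t/n$ with $t\in\setn$ lies within $1/n$ of $\gth_0$, so $\gu-C_p/n\le\max_{1\le t\le n}\tlp(t/n)\le\gu$, and therefore $n\max_{1\le t\le n}\tlp(t/n)=n\gu+O(1)$.

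Combining the two estimates gives $\gU=n\gu+O(1)+\Oll(n\qq)=\gu n+\Oll(n\qq)$, which is \eqref{gU}. The closed form \eqref{gu} then follows by inserting $\gth_0=(1-\gl\qw)^+$ into \eqref{tlp}: for $\gl\le1$ one has $\gth_0=0$ and $\gu=\tlp(0)=0$, while for $\gl>1$ one has $\gu=\tl(1-\gl\qw)=(1-\gl\qw)+\gl\qw\log(\gl\qw)=1-\gl\qw-\gl\qw\log\gl$. The argument is essentially routine once \refT{T1} is available; the only point that needs a little care is that the height is a maximum over the discrete set $t\in\setn$ rather than over a continuum, but this is handled by the Lipschitz bound at the cost of an $O(1)$ term that is absorbed into the $\Oll(n\qq)$ error, so I do not expect any genuine obstacle.
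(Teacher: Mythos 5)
Your proposal is correct and follows essentially the same route as the paper's (very brief) proof: it reads the corollary off from Theorem~\ref{T1} via the inequality $|\max_t d(t)-\max_t n\tlp(t/n)|\le\max_t|d(t)-n\tlp(t/n)|$, identifies $\max_\gth\tlp(\gth)=\tlp(\gth_0)=\tl(\gth_0)$, and absorbs the $O(1)$ discretization error. The Lipschitz argument you add for the discrete grid is just a slightly more explicit version of the paper's remark that $\max_t\tlp(t/n)=\max_\gth\tlp(\gth)+O(1/n)$.
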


\begin{proof}
  Immediate from \refT{T1} and \eqref{tl}, 
since we have $\max_t\tlp(t/n)=\max_\gth\tlp(\gth)+O(1/n)$ and
$\max_\gth\tlp(\gth)=\tlp(\gth_0)=\tl(\gth_0)$.
\end{proof}

In \refSS{SSasn} we will improve this when $\gl>1$,
and show that then
the height $\gU$ is asymptotically normally
distributed (\refT{TH2}).

\begin{corollary}\label{CA}
  The average depth $\bd$ in the depth-first forest is
  \begin{align}\label{ca}
\bd:=
    \frac{1}n\sum_{t=1}^n d(t) = \ga n + \Oll(n\qq),
  \end{align}
where
$\ga=0$ if $\gl\le1$, and, in general,
\begin{align}\label{ga}
  \ga=\ga(p):=
\frac12\gth_1^2 -\frac{1}{\gl}\Bigpar{(1-\gth_1)\log(1-\gth_1)+\gth_1}
=\frac{\gl-1}{\gl}\,\gth_1-\frac12\gth_1^2
.\end{align}
\end{corollary}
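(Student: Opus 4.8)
The plan is to deduce \refC{CA} directly from \refT{T1}, essentially by summing the estimate \eqref{t1} over $t$ and replacing the resulting Riemann sum by an integral. First I would write
\begin{align*}
\bd = \frac1n\sum_{t=1}^n d(t)
= \frac1n\sum_{t=1}^n n\tlp(t/n) + \frac1n\sum_{t=1}^n\bigpar{d(t)-n\tlp(t/n)}.
\end{align*}
For the error term, \refT{T1} gives $\max_t|d(t)-n\tlp(t/n)|=\Oll(n\qq)$, and since there are $n$ summands the triangle inequality yields $\frac1n\sum_{t=1}^n(d(t)-n\tlp(t/n)) = \Oll(n\qq)$ as well (using that $\Oll$ bounds pass through sums of $n$ terms with a factor $n$, then divided by $n$); more carefully, $\E[|\frac1n\sum_t(d(t)-n\tlp(t/n))|^2]\le \E[\max_t|d(t)-n\tlp(t/n)|^2]=O(n)$, which is exactly $\Oll(n\qq)$.

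For the main term, since $\tlp$ is bounded and piecewise smooth on $[0,1]$ (explicitly given by \eqref{tlp}), standard Riemann-sum estimates give
\begin{align*}
\frac1n\sum_{t=1}^n n\tlp(t/n) = n\int_0^1\tlp(\gth)\dd\gth + O(1) = \ga n + O(1),
\end{align*}
where $\ga:=\int_0^1\tlp(\gth)\dd\gth$. It then remains to evaluate this integral. By \eqref{tlp}, $\tlp$ vanishes on $[\gth_1,1]$, so $\ga=\int_0^{\gth_1}\bigpar{\gth+\gl\qw\log(1-\gth)}\dd\gth$. Integrating term by term: $\int_0^{\gth_1}\gth\dd\gth=\tfrac12\gth_1^2$, and $\int_0^{\gth_1}\log(1-\gth)\dd\gth = -\bigpar{(1-\gth_1)\log(1-\gth_1)+\gth_1 - 1\cdot\log 1}$, wait—$[-(1-\gth)\log(1-\gth)+(1-\gth)]$ evaluated gives $-\bigpar{(1-\gth_1)\log(1-\gth_1)+\gth_1}$. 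This yields the first expression for $\ga$ in \eqref{ga}. The second expression follows by using the defining relation $\log(1-\gth_1)=-\gl\gth_1$ from \eqref{gth1} to substitute $\log(1-\gth_1)$, which collapses the $(1-\gth_1)\log(1-\gth_1)$ term. When $\gl\le1$ we have $\gth_1=0$ by definition, so $\ga=0$, consistent with $\tlp\equiv0$ in that case.

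I do not anticipate a genuine obstacle here; the only mild care needed is the bookkeeping that an $\Oll(n\qq)$ bound on the maximum deviation implies, after averaging over the $n$ values of $t$, another $\Oll(n\qq)$ bound rather than something worse—this is precisely the observation recorded after the definition of $\Oll$ in the notation section, that $X_n=\Oll(a_n)$ controls both $\E X_n$ and $\Var X_n$. The rest is the elementary integration above.
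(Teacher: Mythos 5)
Your proposal is correct and follows essentially the same route as the paper: apply \refT{T1}, bound the averaged error by the maximum deviation (giving $\Oll(n\qq)$), replace the Riemann sum of $\tlp$ by $\int_0^1\tlp(\gth)\dd\gth$, and evaluate the integral, using \eqref{gth1} to obtain the second form of $\ga$. The integration and the simplification via $\log(1-\gth_1)=-\gl\gth_1$ check out, so there is nothing to add.
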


\begin{proof}
  By \refT{T1},
  \begin{align}\label{bal1}
\frac{1}{n}\sum_{t=1}^n  d(t) = \sum_{t=1}^n \tlp(t/n) + \Oll\bigpar{n\qq} 
=n \ga + \Oll\bigpar{n\qq}, 
  \end{align}
where
\begin{align}\label{bal2}
  \ga &:= \int_0^1 \tlp(\xtau)\dd\xtau
=
\int_0^{\gth_1} \tl(\xtau)\dd\xtau
=\int_0^{\gth_1}\Bigpar{\xtau+\gl\qw\log(1-\xtau)}\dd\xtau
\notag\\&\phantom:
=\frac12\gth_1^2 -\gl\qw\Bigpar{(1-\gth_1)\log(1-\gth_1)+\gth_1}
,\end{align}
which yields \eqref{ga}, using \eqref{gth1}.
\end{proof}

\begin{remark}\label{Rslow}
When $\gl>1$, the height $\gU$ and average depth $\bd$
are thus linear in $n$, unlike many other
types of random trees.
This might imply a rather slow 
performance of algorithms that operate on the depth-first forest
if it is built explicitly in  a computer's memory.
\end{remark}

\subsection{Asymptotic normality}\label{SSasn}
In this subsection, we show that in the supercritical case $\gl>1$,
\refT{T1} can be improved to yield convergence of $d(t)$ (after rescaling)
to a Gaussian process, at least on $[0,\gth_1)$.
As a consequence, we show that the height $\gU$ is asymptotically normal.

Recall that for an interval $I\subseteq\bbR$, $D(I)$ is the  space
of functions $I\to\bbR$
that are right-continuous with left limits (\emph{c\`adl\`ag})
equipped with the Skorohod topology.
For definitions of the topology
see \eg{} \cite{Billingsley}, 
\cite{JS}, 
\cite[Appendix A.2]{Kallenberg}, 
or \cite{SJ94};
for our purposes it is enough to know that convergence in $D(I)$ to a
continuous limit is equivalent to uniform convergence on compact subsets of
$I$. (Note that it thus matters if the endpoints are included in $I$ or not;
for example, convergence in $D\oio$ and $D\oi$ mean different things.)

We define $d(0):=\td(0):=0$.

\begin{lemma}  \label{LN1}
Assume $\gl>1$.
Then
\begin{align}\label{ln1a}
  n\qqw\bigpar{\td(\floor{n\gth}) -n\tl(\gth)}\dto Z(\gth)
\qquad\text{in $D[0,1)$} 
,\end{align}
where $Z(\gth)$ is a continuous Gaussian process on $\oio$
with mean $\E Z(\gth)=0$ and covariance
$\Cov\bigpar{Z(x),Z(y)}=\gssx\bigpar{\min\set{x,y}}$, where
\begin{align}\label{ln1b}
  \gssx(\gth):=
\frac{(1-p)^2\gth}{p^2(1-\gth)}-\frac{1-p}{p}\log(1-\gth)
=\gl\qww\frac{\gth}{1-\gth}-\gl\qw\log(1-\gth).
\end{align}
Equivalently, $Z(\gth)=B\bigpar{\gssx(\gth)}$ for a 
Brownian motion $B(x)$.
\end{lemma}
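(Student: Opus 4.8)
The plan is to read \eqref{ln1a} as a functional central limit theorem for the triangular array of independent increments $1-\xi_i$, $1\le i<n$, and to deduce it from a standard martingale/Lindeberg FCLT. Since convergence in $D[0,1)$ to a continuous limit is equivalent to convergence in $D[0,\gthx]$ for every fixed $\gthx<1$, I would fix such a $\gthx$ and work on $[0,\gthx]$, where $\pi_t\ge(1-\gthx)p>0$ for $t\le n\gthx$; hence $\xi_t\sim\Ge(\pi_t)$ is stochastically dominated by a fixed geometric variable and all its moments are bounded uniformly in $t\le n\gthx$ and in $n$. Put $M_n(\gth):=n\qqw\bigpar{\td(\floor{n\gth})-\E\td(\floor{n\gth})}$; by \eqref{td} this is, as a process in $\gth$, a c\`adl\`ag martingale with independent increments, and by \eqref{Etd} together with the Lipschitz bound on $\tl$ on $[0,\gthx]$ we have $\E\td(\floor{n\gth})=n\tl(\gth)+O(1)$ uniformly, so it suffices to prove $M_n\dto Z$ in $D[0,\gthx]$ with $Z$ as described.

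Two hypotheses of the FCLT have to be checked. First, the predictable quadratic variation converges: since $\Var\xi_i=(1-\pi_i)/\pi_i^2=\mu_i(\mu_i+1)$ with $\mu_i=1/(\gl(1-i/n))$ by \eqref{mut}, a Riemann-sum estimate gives, uniformly for $\gth\le\gthx$,
\begin{align*}
  \langle M_n\rangle_\gth=\frac1n\sum_{i=1}^{\floor{n\gth}-1}\Var\xi_i
  =\int_0^\gth\Bigpar{\frac{1}{\gl^2(1-x)^2}+\frac{1}{\gl(1-x)}}\dd x+O\bigpar{1/n}
  \tend\gssx(\gth),
\end{align*}
using $\int_0^\gth(1-x)\qww\dd x=\gth/(1-\gth)$ and $\int_0^\gth(1-x)\qw\dd x=-\log(1-\gth)$, which reproduces \eqref{ln1b}. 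Second, the Lindeberg condition holds: the normalized increments are $Y_{n,i}:=n\qqw(\mu_i-\xi_i)$, and since the fourth central moment of $\xi_i$ is $O(1)$ uniformly,
\begin{align*}
  \sum_{i=1}^{\floor{n\gthx}}\E\Bigsqpar{Y_{n,i}^2\indic{|Y_{n,i}|>\eps}}
  \le\sum_{i=1}^{\floor{n\gthx}}\frac{\E\bigsqpar{(\mu_i-\xi_i)^4}}{\eps^2n^2}
  =O\bigpar{1/n}\tend0
\end{align*}
for every $\eps>0$.

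By a standard functional central limit theorem for triangular arrays (see, e.g., \cite{JS} or \cite{Kallenberg}), these two facts give $M_n\dto W$ in $D[0,\gthx]$, where $W$ is the continuous Gaussian martingale with $\langle W\rangle_\gth=\gssx(\gth)$; equivalently $W(\gth)=B(\gssx(\gth))$ for a Brownian motion $B$, so that $\Cov\bigpar{W(x),W(y)}=\gssx\bigpar{\min\set{x,y}}$ and $W=Z$. Combining this with $\E\td(\floor{n\gth})=n\tl(\gth)+O(1)$ and then letting $\gthx\upto1$ yields \eqref{ln1a} in $D[0,1)$. Note that the hypothesis $\gl>1$ plays no role in this argument — only the restriction to $\gthx<1$ matters; it is stated because the supercritical regime is the one studied in this subsection. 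The main obstacle is not conceptual but a matter of care: keeping all estimates uniform in $\gth$ up to the edge $\gthx$, verifying the Lindeberg condition, and — since $\gssx(\gth)\to\infty$ and $\Var\xi_t\to\infty$ as $\gth\upto1$ — being careful to claim convergence only in $D[0,1)$ (on compacts) rather than in $D[0,1]$.
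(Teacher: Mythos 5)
Your proposal is correct and follows essentially the same route as the paper: compute the mean via \eqref{Etd} and the variance asymptotics $n\qw\sum_i\Var\xi_i\to\gssx(\gth)$, then invoke a Donsker-type/martingale functional CLT for the independent, non-identically distributed increments $1-\xi_i$ on $[0,\gthx]$ for each $\gthx<1$ (the paper cites exactly this martingale-FCLT option via \cite{JS}), concluding convergence in $D[0,1)$ to $B(\gssx(\cdot))$. Your explicit Lindeberg/quadratic-variation verification and the observation that $\gl>1$ is not actually needed are fine refinements, not a different method.
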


\begin{proof}
Since the random variables $\xi_t$ are independent, \eqref{td} 
and \eqref{xi}--\eqref{xipi} yield, similarly to \eqref{Etd1}, 
\begin{align}\label{v1}
\Var\bigsqpar{\td(t)} &
=\sum_{i=1}^{t-1}\Var \xi_i   
=\sum_{i=1}^{t-1}\frac{1-\pi_i}{\pi_i^2}
=\sum_{i=1}^{t-1}\frac{(1-p)(1-pi/n)}{p^2(1-i/n)^2}
.\end{align}
Hence, uniformly for $t/n\le\gthx$ for any $\gthx<1$, 
\begin{align}\label{v2}
  \Var\bigsqpar{\td(t)}
= n\gssx(t/n)+O(1),
\end{align}
with
\begin{align}\label{v3}
 \gssx(\gth):= 
\frac{1-p}{p^2}\int_0^{\gth}\frac{1-px}{(1-x)^2}\dd x
=\frac{(1-p)^2\gth}{p^2(1-\gth)}-\frac{1-p}{p}\log(1-\gth),
\end{align}
in agreement with \eqref{ln1b}.
Since also $\E\td\bigpar{\floor{n\gth}}=n\tl(\gth)+O(1)$ by \eqref{Etd},
the marginal convergence for a fixed $\gth$ in \eqref{ln1a} follows by the
classical central limit theorem for independent (not identically
distributed) variables, \eg{} using Lyapounov's condition
\cite[Theorem 7.2.2]{Gut}.

The functional limit \eqref{ln1a} is thus a version of Donsker's theorem 
\cite[Theorem 16.1]{Billingsley}, extended from the i.i.d.\ case to
the non-identically distributed variables $\xi_i$.
We expect that such generalizations of Donsker's theorem exist in the
literature, but we do not know any specific reference;
however, standard proofs extend without problems.
(For example, the proof of \cite[Theorem 16.1]{Billingsley}.
Alternatively, finite-dimensional convergence follows from the classical
central limit theorem, and  tightness in $D\oio$ can easily be shown 
by \eg{}
Aldous's criterion \cite[Theorem 16.11]{Kallenberg}.)
Moreover, \eqref{ln1a} follows also directly
by general results on convergence of
martingales, for example
\cite[Proposition 2.6]{SJ94} 
or \cite[Proposition 9.1]{SJ154}, which both are based on
\cite[Theorem VIII.3.12]{JS}.
\end{proof}

\begin{lemma}\label{LN2}
Assume $\gl>1$ and let $0<\gthx<\gth_1$.
Then 
\begin{align}\label{ln2}
  \min_{1\le j\le \floor{n\gthx}}\td(j) =\op\bigpar{n\qq}.
\end{align}
\end{lemma}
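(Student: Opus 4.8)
The plan is to show that for $\gth < \gth_1$ the walk $\td(j)$ stays well above $0$, in fact at height of order $n$, so its minimum over $j \le \floor{n\gthx}$ is very unlikely to be as small as $\onn$. The key point is that $\tl$ is concave and strictly positive on $(0,\gth_1)$, so on the interval $[0,\gthx]$ with $\gthx < \gth_1$ we have $\tl(\gth) \ge c_0 > 0$ except near the origin, and near the origin the only danger is the very first few steps of the walk.

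First I would fix $\gthx < \gth_1$ and choose an auxiliary $\gth' \in (0,\gthx)$ small; by the concavity of $\tl$ and $\tl(0)=0$, $\tl'(0) > 0$ (since $\gl > 1$), we have $\tl(\gth) \ge c_1 \gth > 0$ for $\gth \in (0,\gth')$ and $\tl(\gth) \ge c_2 > 0$ for $\gth \in [\gth',\gthx]$, for suitable constants. Split the minimum over $1 \le j \le \floor{n\gthx}$ into three ranges: (a) a bounded initial range $j \le j_0$ for a large constant $j_0$; (b) the range $j_0 < j \le n\gth'$; (c) the range $n\gth' < j \le \floor{n\gthx}$. On range (c), $\E\td(j) = n\tl(j/n) + O(1) \ge c_2 n/2$ for $n$ large, and by Doob's inequality applied to the martingale $\td(j) - \E\td(j)$ (exactly as in the proof of \refT{T1}, since $\Var\xi_i = O(1)$ for $i \le \floor{n\gthx} < n$) the maximal fluctuation over this range is $\Oll(n\qq)$, hence $\op(n)$; so $\min$ over range (c) is $\ge c_2 n/4 > 0$ whp, in particular $\op(n\qq)$ is not even needed there — it is bounded below by a positive constant times $n$.

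Range (b) is the delicate one: here $\E\td(j)$ can be as small as $c_1 j_0 = O(1)$ near the left end, so a lower bound by a constant times $n$ is not available; instead I would use that $\E\td(j) \ge c_1 j$ there together with a maximal inequality for $\min_{j_0 < j \le n\gth'}(\td(j) - \E\td(j))$. Applying Doob to the martingale on $[1, \floor{n\gth'}]$ gives $\max_{j \le n\gth'}|\td(j) - \E\td(j)| = \Oll\bigpar{(n\gth')\qq}$, which is $\op(n\qq)$ uniformly and, more to the point, is $\op(j)$ only for $j \gg \sqrt{n}$ — so for $j$ between $j_0$ and $\sqrt n$ the drift $c_1 j$ does not dominate the fluctuations and we cannot conclude positivity there directly. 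The clean fix is to absorb the small-$j$ part into range (a): take $j_0 = j_0(n) \to \infty$ slowly, e.g. $j_0 = \floor{\log^2 n}$, so that for $j_0 < j \le n\gth'$ the drift $c_1 j \ge c_1 j_0$ grows, while we control the walk on $[1, j_0]$ crudely in range (a) by noting $\td(j) \ge -\sum_{i=1}^{j_0} \xi_i$ and $\E\sum_{i=1}^{j_0}\xi_i = O(j_0) = o(\sqrt n)$, so $\min$ over range (a) is $\op(n\qq)$ by Markov's inequality. For range (b) with this growing $j_0$, a second-moment (Chebyshev) bound on $\td(j) - \E\td(j)$ at each $j$ combined with a chaining or direct union bound over dyadic blocks shows $\min_{j_0 < j \le n\gth'}\td(j) \ge 0$ whp, since on the block $[2^k, 2^{k+1}]$ the drift is $\gtrsim 2^k$ and the fluctuation is $\Oll(2^{k/2})$, and summing failure probabilities over $k$ from $\log_2 j_0$ to $\log_2(n\gth')$ gives $o(1)$.

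The main obstacle, then, is range (b): balancing the linear drift against the $\sqrt{\cdot}$ fluctuations near the origin, where neither a pure constant-lower-bound argument nor a single maximal inequality suffices; the resolution is the dyadic-block / slowly-growing-cutoff device described above, which is routine but does require writing out the union bound carefully. Everything else (ranges (a) and (c)) follows immediately from Doob's inequality, $\Var\xi_i = O(1)$ for $i \le \floor{n\gthx}$, and the positivity and concavity of $\tl$ on $(0,\gth_1)$ already established before \refT{T1}.
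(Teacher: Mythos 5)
Your argument is correct, but it takes a genuinely different route from the paper's. You split $[1,\floor{n\gthx}]$ into three ranges and, in the middle range, run a dyadic union bound balancing the linear drift $c_1 j$ against fluctuations of order $j\qq$ on each block, with a slowly growing cutoff $j_0=\floor{\log^2 n}$ to make the block failure probabilities summable; the initial range is handled by Markov's inequality applied to $\sum_{i\le j_0}\xi_i$. The paper instead makes a single cut at $t_n=\ceil{n^{2/3}}$, chosen as the sweet spot where both halves become one-line applications of Doob: for $t\ge t_n$ the drift $n\tl(t/n)\ge c\,n^{2/3}$ already dominates the \emph{global} fluctuation $\Oll(n\qq)=\op(n^{2/3})$, so the walk is \whp{} positive there and cannot attain the minimum (which is $\le\td(1)=0$); and for $t\le t_n$ Doob's inequality on $[1,t_n]$ alone gives $\min_{t\le t_n}\td(t)=\Oll(t_n\qq)=\Oll(n^{1/3})=\op(n\qq)$ directly, with no need to prove positivity near the origin at all. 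This collapses your ranges (a) and (b) into one step and avoids the chaining entirely; your version is more laborious but proves the slightly stronger fact that the walk is \whp{} nonnegative down to scale $\log^2 n$. One small inaccuracy in your exposition (harmless, since it occurs in the passage where you explain why the naive approach fails): $\Oll\bigpar{(n\gth')\qq}$ is \emph{not} $\op(n\qq)$ for fixed $\gth'>0$ — the paper's convention is that $\Oll(a_n)$ only yields $\op(\go_n a_n)$ for $\go_n\to\infty$ — but your final argument does not rely on that claim.
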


\begin{proof}
  Let $t_n:=\ceil{n^{2/3}}$.
If $n$ is large enough, then $t_n<n\gthx$, and, since $\tl'(0)=1-\gl\qw>0$
by \eqref{tl},
\begin{align}\label{ln2a}
  \min_{t_n/n\le\gth\le\gthx}\tl(\gth)
=\tl(t_n/n) \ge ct_n/n \ge c n^{-1/3}
\end{align}
for some constant $c>0$.
Furthermore, \eqref{m*} implies
\begin{align}\label{ln2b}
  \max_{t_n\le t\le n\gthx} \bigabs{\td(t)- n\tl(t/n)}=\Ollnqq = \op\bigpar{n^{2/3}}
\end{align}
(recall that $\Oll(a_n)$ implies $\op(\go_na_n)$ for any $a_n$ and 
any $\go_n\to\infty$).
It follows from \eqref{ln2a}--\eqref{ln2b} that \whp{} 
$ \bigabs{\td(t)- n\tl(t/n)} < n\tl(t/n)$ for all $t\in[t_n,n\gthx]$.
Hence, \whp, $\td(t)>0=\td(1)$ for  all $t\in[t_n,n\gthx]$.
Consequently, \whp,
\begin{align}\label{minmin}
  \min_{1\le t\le n\gthx}\td(t)=\min_{1\le t\le t_n}\td(t).  
\end{align}

For  $t\le t_n$, we use Doob's inequality in the form \eqref{emma} again.
Since $\min_{t\le t_n}\td(t)\le\td(1)=0$ and $\E\td(t)\ge0$ for $t\le t_n$
(for $n$ large),
we have $\bigabs{  \min_{t\le t_n}\td(t)}\le \max_{t\le t_n}|\td(t)-\E\td(t)|$
and thus \eqref{emma} yields
\begin{align}\label{ln2c}
\E\bigabs{  \min_{t\le t_n}\td(t)}^2
\le4 \sum_{i=1}^{t_n-1} \Var(\xi_i)
=O(t_n) = O\bigpar{n^{2/3}}.
\end{align}
Hence,
\begin{align}\label{ln2d}
   \min_{t\le t_n}\td(t) = \Oll\bigpar{n^{1/3}} = \op\bigpar{n\qq}.
\end{align}
The proof is completed by combining \eqref{minmin} and \eqref{ln2d}.
\end{proof}

\begin{theorem}\label{TN1}
  Assume $\gl>1$.
Then
\begin{align}\label{tn1}
  n\qqw\bigpar{d(\floor{n\gth}) -n\tl(\gth)}\dto Z(\gth)
\qquad\text{in $D[0,\gth_1)$} 
\end{align}
where $Z(\gth)$ is the continuous Gaussian process defined in \refL{LN1}.
\end{theorem}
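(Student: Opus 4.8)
The idea is simply to combine the deterministic identity of \refL{Ldtd} with the functional limit of \refL{LN1}, controlling the running minimum by \refL{LN2}.

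Since the limiting process $Z$ is a.s.\ continuous, convergence in $D[0,\gth_1)$ is equivalent to convergence in $D[0,\gthx]$ for every fixed $\gthx$ with $0<\gthx<\gth_1$; we fix such a $\gthx$ and prove the latter. By \refL{Ldtd},
\begin{align}\label{tn1pf1}
d(\floor{n\gth}) = \td(\floor{n\gth}) - m_n(\gth),
\qquad m_n(\gth):=\min_{1\le j\le \floor{n\gth}}\td(j).
\end{align}
Since $\td(1)=0$, the quantity $m_n(\gth)$ is $\le 0$ and nonincreasing in $\gth$; hence for $0\le\gth\le\gthx$ we have $0\ge m_n(\gth)\ge m_n(\gthx)=\min_{1\le j\le\floor{n\gthx}}\td(j)$. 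By \refL{LN2}, $m_n(\gthx)=\op\bigpar{n\qq}$, and therefore
\begin{align}\label{tn1pf2}
\sup_{0\le\gth\le\gthx}\bigabs{m_n(\gth)} = \op\bigpar{n\qq}.
\end{align}

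Consequently, from \eqref{tn1pf1},
\begin{align}\label{tn1pf3}
n\qqw\bigpar{d(\floor{n\gth})-n\tl(\gth)}
= n\qqw\bigpar{\td(\floor{n\gth})-n\tl(\gth)} - n\qqw m_n(\gth).
\end{align}
By \refL{LN1}, restricted to the compact subinterval $[0,\gthx]\subset[0,1)$, the first term on the right converges in distribution to $Z$ in $D[0,\gthx]$, while by \eqref{tn1pf2} the second term tends to $0$ uniformly on $[0,\gthx]$, hence to $0$ in probability in $D[0,\gthx]$. Since a uniformly small perturbation is negligible in the Skorohod metric (the Skorohod distance between $f$ and $f+g$ is at most $\sup|g|$), a Slutsky-type argument applied to \eqref{tn1pf3} yields
\begin{align}\label{tn1pf4}
n\qqw\bigpar{d(\floor{n\gth})-n\tl(\gth)}\dto Z(\gth)
\qquad\text{in } D[0,\gthx].
\end{align}
Letting $\gthx\upto\gth_1$ gives \eqref{tn1}, as claimed. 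The only point requiring a little care is the topological bookkeeping in the last two sentences; everything substantive is already contained in \refLs{LN1} and \ref{LN2}.
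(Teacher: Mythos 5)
Your proof is correct and follows essentially the same route as the paper: decompose $d=\td-\min\td$ via Lemma~\ref{Ldtd}, bound the running minimum uniformly by $\op(n^{1/2})$ using Lemma~\ref{LN2} (together with its monotonicity), and conclude from Lemma~\ref{LN1}. The extra detail you supply on the Skorohod/Slutsky bookkeeping is the part the paper leaves implicit, and it is handled correctly.
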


\begin{proof}
  By \eqref{dtd} and \refL{LN2},
for any $\gthx<\gth_1$,
\begin{align}
\max_{0\le t\le\floor{n\gthx}} \bigabs{ d(t)-\td(t)}
=\Bigabs{  \min_{1\le j\le \floor{n\gthx}}\td(j)}
=\op\bigpar{n\qq}
.\end{align}
The theorem now follows
from \refL{LN1}.
\end{proof}

\begin{theorem}\label{TH2}
  Let $\gl>1$. Then the height $\gU$ of the \dff{} has an asymptotic normal
  distribution:
  \begin{align}\label{th2a}
\frac{\gU-\gu n}{\sqrt n}\dto N\bigpar{0,\gss}    
  \end{align}
with $\gu$ given by \eqref{gu}, and
\begin{align}\label{th2b}
  \gss:= \gl\qw-\gl\qww+\gl\qw\log\gl.
\end{align}
\end{theorem}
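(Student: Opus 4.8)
The plan is to deduce \refT{TH2} from the functional limit theorem \refT{TN1} together with the strict concavity of $\tl$ at its interior maximum $\gth_0$, via the standard localization argument for the argmax of a parabola perturbed by a stochastic process. Fix once and for all some $\gthx$ with $\gth_0<\gthx<\gth_1$, and write $X_n(\gth):=n\qqw\bigpar{d(\floor{n\gth})-n\tl(\gth)}$; restricting \refT{TN1} to the compact interval $[0,\gthx]\subset[0,\gth_1)$ gives $X_n\dto Z$ in $D[0,\gthx]$, hence uniformly on $[0,\gthx]$, where $Z$ is the continuous Gaussian process of \refL{LN1}. Since $\gu=\tl(\gth_0)$ by \eqref{gu}, for every integer $t$ with $1\le t\le\floor{n\gthx}$ we have the exact identity $n\qqw\bigpar{d(t)-\gu n}=n\qq\bigpar{\tl(t/n)-\gu}+X_n(t/n)=:g_n(t/n)$, and so the whole theorem will be read off from $\max_{1\le t\le\floor{n\gthx}}g_n(t/n)$.

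First I would localize the height to the range $t\le\floor{n\gthx}$. By \refT{T1}, $\max_{\floor{n\gthx}<t\le n}d(t)\le n\sup_{\gth\in(\gthx,1]}\tlp(\gth)+\Oll(n\qq)=n\tl(\gthx)+\Oll(n\qq)$, since $\tlp$ is continuous and weakly decreasing on $[\gthx,1]$ (because $\gthx>\gth_0$), so its supremum over $(\gthx,1]$ equals $\tlp(\gthx)=\tl(\gthx)$. As $\tl$ is concave with strict maximum $\gu$ at $\gth_0<\gthx$, the constant $c:=\gu-\tl(\gthx)$ is positive, whence, \whp, $\max_{\floor{n\gthx}<t\le n}d(t)\le\gu n-\tfrac12 cn$. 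On the other hand $\tl'(\gth_0)=0$ gives $\tl(\floor{n\gth_0}/n)=\gu+O(n^{-2})$, so by \refT{T1} again, \whp, $d(\floor{n\gth_0})=\gu n+\Oll(n\qq)\ge\gu n-\go_n\sqrt n$ for any fixed sequence $\go_n\to\infty$, which exceeds $\gu n-\tfrac12 cn$ for large $n$. Hence, \whp, $\gU=\max_{1\le t\le\floor{n\gthx}}d(t)$, and therefore, \whp, $n\qqw(\gU-\gu n)=\max_{1\le t\le\floor{n\gthx}}g_n(t/n)$.

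It remains to show $\max_{1\le t\le\floor{n\gthx}}g_n(t/n)\dto Z(\gth_0)$. By the Skorohod coupling I may assume $X_n\to Z$ uniformly on $[0,\gthx]$ a.s., so that $M:=\sup_n\sup_{[0,\gthx]}|X_n|<\infty$ almost surely. Since $\tl$ is $C^2$ near $\gth_0$ with $\tl'(\gth_0)=0$ and $c_0:=-\tl''(\gth_0)=\gl>0$, Taylor's theorem gives $\tl(\gth)-\gu\le-\tfrac14 c_0(\gth-\gth_0)^2$ on a fixed neighbourhood $|\gth-\gth_0|<\eps$, while $\tl(\gth)-\gu\le-\gd(\eps)<0$ for $\gth\in[0,\gthx]$ with $|\gth-\gth_0|\ge\eps$. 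Hence, if $\gth\in[0,\gthx]$ lies outside the window $W_n:=\{\gth:|\gth-\gth_0|<n^{-1/4}\log n\}$, then either $|\gth-\gth_0|\ge\eps$, so $g_n(\gth)\le-\gd(\eps)n\qq+M$, or $n^{-1/4}\log n\le|\gth-\gth_0|<\eps$, so $g_n(\gth)\le-\tfrac14 c_0(\log n)^2+M$; in either case $g_n(\gth)\to-\infty$. So, \whp, the maximum of $g_n$ over the grid $\{t/n:1\le t\le\floor{n\gthx}\}$ is attained at some point of $W_n$. On $W_n$ we have $g_n(\gth)\le X_n(\gth)$, and since $W_n$ shrinks to $\{\gth_0\}$, the uniform convergence of $X_n$ and continuity of $Z$ give $\sup_{\gth\in W_n}X_n(\gth)\to Z(\gth_0)$ a.s.; for the matching lower bound, $g_n(\floor{n\gth_0}/n)=n\qq\bigpar{\tl(\floor{n\gth_0}/n)-\gu}+X_n(\floor{n\gth_0}/n)=O(n^{-3/2})+X_n(\floor{n\gth_0}/n)\to Z(\gth_0)$ a.s., again using $\tl'(\gth_0)=0$. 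Thus $\max_{1\le t\le\floor{n\gthx}}g_n(t/n)\to Z(\gth_0)$ a.s.\ under the coupling, so $n\qqw(\gU-\gu n)\dto Z(\gth_0)$, which by \refL{LN1} is $N\bigpar{0,\gssx(\gth_0)}$. Finally, substituting $1-\gth_0=\gl\qw$ into \eqref{ln1b} yields $\gssx(\gth_0)=\gl\qww(\gl-1)+\gl\qw\log\gl=\gl\qw-\gl\qww+\gl\qw\log\gl=\gss$, which is \eqref{th2b} and completes the proof.

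The main obstacle is the localization in the last paragraph: one must verify \emph{simultaneously} that the deterministic parabolic penalty $n\qq\bigpar{\tl(\gth)-\gu}$ confines the argmax to a window about $\gth_0$ shrinking like $n^{-1/4}$, and that the random part $X_n$, restricted to such a shrinking window, still converges to the single Gaussian variable $Z(\gth_0)$; the Skorohod coupling is what lets these two ingredients mesh, by converting the weak functional convergence of \refT{TN1} into a.s.\ uniform convergence. A minor point is that $\gth_0$ need not be of the form $t/n$, which is harmless precisely because $\tl'(\gth_0)=0$, so the $O(1/n)$ rounding error contributes only $O(n^{-3/2})$ to $g_n$.
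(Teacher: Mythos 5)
Your proposal is correct and follows essentially the same route as the paper's proof: Theorem \ref{TN1} plus the Skorohod coupling to get a.s.\ uniform convergence on $[0,\gthx]$, a quadratic Taylor bound at the interior maximum $\gth_0$ to confine the argmax to a shrinking window (your $n^{-1/4}\log n$ versus the paper's $n^{-1/6}$ is immaterial), and Theorem \ref{T1} to rule out the region away from $\gth_0$. The variance computation $\gssx(\gth_0)=\gl\qw-\gl\qww+\gl\qw\log\gl$ also matches.
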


\begin{proof}
Fix some $\gthx\in(\gth_0,\gth_1)$. 
  By \refT{TN1} and
the Skorohod coupling theorem \cite[Theorem~4.30]{Kallenberg},
we may assume that the random variables for different $n$ are coupled such
that \eqref{tn1} holds (almost) surely. 
Since $Z(\gth)$ is continuous, 
this implies uniform covergence on $[0,\gthx]$, \ie,
\begin{align}\label{sam1}
  d\bigpar{\floor{n\gth}}=n\tl(\gth)+n\qq Z(\gth) + o\bigpar{n\qq},
\end{align}
uniformly on  $[0,\gthx]$. 
(The $o(n\qq)$ here are random, but uniform in $\gth$.)
For $|\gth-\gth_0|\le n^{-1/6}$, we have $Z(\gth)=Z(\gth_0)+o(1)$, since
$Z$ is continuous, and thus \eqref{sam1} yields, almost surely,
\begin{align}\label{sam2}
  d\bigpar{\floor{n\gth}}=n\tl(\gth)+n\qq Z(\gth_0) + o\bigpar{n\qq},
\qquad |\gth-\gth_0|\le n^{-1/6}.
\end{align}
Since $\max_\gth \tl(\gth)=\tl(\gth_0)$, it follows that
\begin{align}\label{sam3}
\max_{ |\gth-\gth_0|\le n^{-1/6}} d\bigpar{\floor{n\gth}}
=n\tl(\gth_0)+n\qq Z(\gth_0) + o\bigpar{n\qq}.
\end{align}

On the other hand, for $|\gth-\gth_0|\ge n^{-1/6}$, we have by a Taylor
expansion, for some $c>0$,
\begin{align}
  \tl(\gth)\le \tl(\gth_0) - c (\gth-\gth_0)^2
\le \tl(\gth_0) - c n^{-1/3}.
\end{align}
Hence, \eqref{t1} implies
\begin{equation}\label{sam4}
  \max_{|\gth-\gth_0|\ge n^{-1/6}} d(t)
\le n \max_{|\gth-\gth_0|\ge n^{-1/6}} \tl(\gth) + \Ollnqq
\le n \tl(\gth_0)-c n^{2/3} + \Ollnqq
.\end{equation}
Comparing \eqref{sam3} and \eqref{sam4}, we see that \whp{} the maximum in
\eqref{sam3} is larger than the one in \eqref{sam4}, and thus
\begin{align}
  \gU=\max_{0\le\gth\le1}d\bigpar{\floor{n\gth}} 
=n\tl(\gth_0)+n\qq Z(\gth_0) + o\bigpar{n\qq}.
\end{align}
Hence, \whp,
\begin{align}
  \frac{\gU-\tl(\gth_0) n}{\sqrt n} = Z(\gth_0)+o(1),
\end{align}
which implies
\begin{align}
 \frac{\gU-\tl(\gth_0) n}{\sqrt n} \dto Z(\gth_0) \sim N\bigpar{0,\gssx(\gth_0)}.
\end{align}
Since $\tl(\gth_0)=\gu$ by \eqref{gu}, 
this shows \eqref{th2a} with
$\gss:=\gssx(\gth_0)$, which gives \eqref{th2b} by \eqref{ln1b} and
$\gth_0:=1-\gl\qw$. 
\end{proof}

\subsection{The trees in the forest}

\begin{theorem}\label{TT}
  Let $N$ be the number of trees in the depth-first forest.
Then
\begin{align}\label{tt}
  N=\psi n + \Ollnqq,
\end{align}
where
\begin{align}\label{psi}
  \psi=\psi(p):=
1-\gth_1-\frac{\gl}{2}(1-\gth_1)^2.
\end{align}
\end{theorem}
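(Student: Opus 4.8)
The plan is to count the roots of the depth-first forest by recognizing that a new tree is created at step $t+1$ precisely in the event described in case (iv), namely $d(t+1)=0$ while $d(t)=\ell\ge0$; thus $N=1+\sum_{t=1}^{n-1}\indic{d(t+1)=0}$. Equivalently, using \eqref{dtd} and \eqref{adv}, a new tree is born at $t+1$ iff $\td(t+1)<\min_{1\le j\le t}\td(j)$, i.e.\ iff $\td(t+1)$ is a strict running minimum of the walk $\td(\cdot)$. So $N$ equals one plus the number of strict descending ladder epochs of $\td(\cdot)$ up to time $n$.

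The key difficulty is that this is not a sum of independent indicators, so I would instead condition on $\xi_t$ directly. From case (iv), given the past, $\PP\bigpar{d(t+1)=0\mid d(t)=\ell}=(1-\pi_t)^{\ell+1}$, so $\E\bigsqpar{\indic{d(t+1)=0}\mid \mathcal F_t}=(1-\pi_t)^{d(t)+1}$. Summing, $\E N = 1+\sum_{t=1}^{n-1}\E\bigsqpar{(1-\pi_t)^{d(t)+1}}$. Now use \refT{T1}: $d(t)=n\tlp(t/n)+\Oll(n\qq)$. For $t/n=\gth>\gth_1$ we have $\tlp(\gth)=0$, and $(1-\pi_t)^{d(t)+1}=(1-\pi_t)^{O_p(\sqrt n)}$; since $1-\pi_t=(1-p)/(1-p\gth)$ is bounded away from $1$ there (for $\gth\le\gthx<1$), this is $(1-\pi_t)^{d(t)+1}\approx 1-\pi_t$ up to a factor $1+\op(1)$, more precisely its expectation is $(1-\pi_t)(1+O(\sqrt n \cdot 0))$—actually one must be slightly careful, but $d(t)\ge0$ gives $(1-\pi_t)^{d(t)+1}\le 1-\pi_t$ and $\E\bigsqpar{(1-\pi_t)^{d(t)+1}}\to$ the value one gets by plugging $d(t)\approx n\tlp$. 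For $t/n=\gth<\gth_1$, $n\tlp(\gth)=n\tl(\gth)\to\infty$ linearly, so $(1-\pi_t)^{d(t)+1}$ is exponentially small and contributes $o(1)$ to each term (uniformly $\le e^{-cn}$ away from $\gth_1$). Hence
\begin{align}
\E N = 1+\sum_{\gth_1 n< t< n}\frac{1-p}{1-pt/n}+o(n)+\text{(boundary layer near }\gth_1\text{)}.
\end{align}
Converting the sum to an integral, $\frac1n\sum_{\gth_1 n<t<n}\frac{1-p}{1-pt/n}\to\int_{\gth_1}^1\frac{1-p}{1-p\gth}\dd\gth$. A direct computation of this integral, using $1-p=1/(1+\gl)$ and the defining relation $\log(1-\gth_1)=-\gl\gth_1$ from \eqref{gth1}, should collapse to $1-\gth_1-\tfrac{\gl}{2}(1-\gth_1)^2=\psi$; this is the routine-calculation step. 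So $\E N = \psi n + O(\sqrt n)$, where the $O(\sqrt n)$ absorbs: the $\Oll(\sqrt n)$ error in $d(t)$ propagated through the Lipschitz-in-exponent estimate, the Euler--Maclaurin error $O(1)$ per unit, and the $O(1)$-width boundary layer at $\gth_1$ (where $\tl(\gth)$ vanishes to first order, so $n\tl(t/n)$ is $O(\sqrt n)$ only over an $O(\sqrt n)$-long window of $t$, contributing $O(\sqrt n)$ total).

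For the variance, I would write $N-\E N = \sum_{t=1}^{n-1}\bigpar{\indic{d(t+1)=0}-(1-\pi_t)^{d(t)+1}} + \sum_{t=1}^{n-1}\bigpar{(1-\pi_t)^{d(t)+1}-\E(1-\pi_t)^{d(t)+1}}$. The first sum is a martingale with bracket $\sum_t (1-\pi_t)^{d(t)+1}\bigpar{1-(1-\pi_t)^{d(t)+1}}\le \sum_t(1-\pi_t)^{d(t)+1}$, whose expectation is $O(n)$ by the $\E N$ computation, giving $\Var = O(n)$ for that piece. The second sum involves the fluctuations of $d(t)$; using the Lipschitz bound $\bigabs{(1-\pi_t)^{a}-(1-\pi_t)^{b}}\le |a-b|$ and $\max_t|d(t)-n\tlp(t/n)|=\Oll(\sqrt n)$, plus the fact that the exponential kills all $t$ with $t/n<\gth_1-\eps$, one gets $\sum_t \bigabs{(1-\pi_t)^{d(t)+1}-(1-\pi_t)^{n\tlp(t/n)+1}}=\Oll(\sqrt n)$ effectively supported on $O(n)$ terms near and above $\gth_1$—here one must check this sum is $\Oll(\sqrt n)$ not merely $\Oll(n^{3/2})$; the point is that for $t/n>\gth_1+\eps$ the exponent base is constant so the increment of the summand is genuinely $\Oll(n\qqw)$ per term only if $d(t)-n\tlp$ is, which it is not pointwise—so more care is needed. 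I expect \textbf{this} to be the main obstacle: controlling $\sum_t\bigpar{(1-\pi_t)^{d(t)+1}-\E(1-\pi_t)^{d(t)+1}}$ in $L^2$ by $O(n)$. The cleanest route is probably to bound $\Var(N)$ directly via a second-moment / Efron--Stein argument on the independent $\xi_t$'s: changing one $\xi_t$ changes the walk $\td(\cdot)$, hence the set of running minima, by a controlled amount, and summing the squared per-coordinate influences gives $\Var N=O(n)$. Together with $\E N=\psi n + O(\sqrt n)$ this yields $N=\psi n+\Ollnqq$, which is \eqref{tt}.
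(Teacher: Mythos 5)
Your starting point agrees with the paper's ($N=\sum_t\indic{d(t)=0}$, and the exact identity $\E\bigsqpar{\indic{d(t+1)=0}\mid\cF_t}=(1-\pi_t)^{d(t)+1}$ is correct), but the central step is not valid and in fact produces the wrong constant. You evaluate $\E\bigsqpar{(1-\pi_t)^{d(t)+1}}$ by substituting $d(t)\approx n\tlp(t/n)$, i.e.\ $d(t)\approx 0$ for $t/n=\gth>\gth_1$, giving a root density $1-\pi_t=(1-p)/(1-p\gth)$. Theorem \ref{T1} only locates $d(t)$ to precision $\Oll(n\qq)$, while $(1-\pi_t)^{d(t)+1}$ depends on the law of $d(t)$ at the $O(1)$ scale; beyond the giant tree $d(t)$ is a reflected walk with negative drift which is of order $1$ but typically \emph{not} $0$, so ``$\tlp=0$'' gives no licence to set $d(t)=0$ in the exponent. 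The correct local root density follows from the lack-of-memory identity \eqref{erika}--\eqref{ele}: $\E\bigsqpar{\gD d(t)\mid\cF_t}=1-\mu_t+\mu_t\E\bigsqpar{\JJ_{t+1}\mid\cF_t}$, and setting the drift to zero in the quasi-stationary regime yields density $1-\mu_t^{-1}=1-\gl(1-\gth)$ by \eqref{mut}, not $1-\pi_t$. Accordingly the ``routine calculation'' you defer does not collapse to $\psi$: $\int_{\gth_1}^1\frac{1-p}{1-p\gth}\dd\gth=\gl\qw\log\frac{1-p\gth_1}{1-p}$, which for instance at $p=0.6$ is about $0.324$ whereas $\psi\approx0.287$. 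The paper's proof avoids ever evaluating $\E\bigsqpar{(1-\pi_t)^{d(t)+1}}$: it solves the identity above for $\JJ_{t+1}$, so that $\sum_{t>T_1}\JJ_t$ becomes $n-T_1-\sum_t\mu_t^{-1}$ plus a sum of $\mu_t^{-1}\gD d(t)$ (handled by summation by parts together with $d(t)=\Ollnqq$ for $t\ge \gth_1 n$ from Theorem \ref{T1}) plus a martingale $\cM_t$ with uniformly bounded increment variances; this is what produces \eqref{psi}, see \eqref{cm7}.

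Your fallback for the variance is also not yet sufficient as sketched. In the Efron--Stein bound, resampling $\xi_t$ shifts the whole tail of $\td$ by $\delta=\xi_t-\xi_t'$, and the best easy bound on the change in the number of strict running minima is of order $\min\{|\delta|,\,n-t\}$; since $\Var\xi_t=(1-\pi_t)/\pi_t^2$ blows up like $(n/(n-t))^2$ as $t\to n$, the summed squared influences are of order $n^{3/2}$ (or $n^2$ without the truncation), giving only $N=\E N+\Oll(n^{3/4})$, weaker than \eqref{tt}. The paper circumvents this because its martingale increments are $\mu_t^{-1}\xi_t+O(1)$, and the normalization by $\mu_t^{-1}=\gl(1-t/n)$ keeps their second moments $O(1)$ uniformly up to $t=n$.
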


Figure~\ref{fig:psilambda} shows the parameter $\psi$ as a function of the
average degree $\lambda$.
\begin{figure}[ht]
\includegraphics[height=7cm]{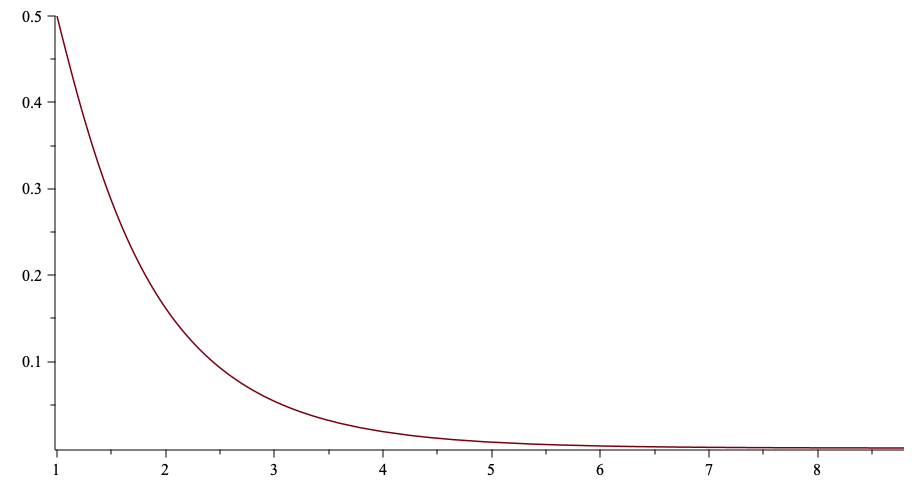}
\caption{$\psi$, as function of $\lambda$.}
\label{fig:psilambda}
\end{figure}

\begin{proof}
  Let $\JJ_t:=\indic{d(t)=0}$,
the indicator that vertex $t$ is a root and thus starts a new tree.
Thus $N=\sum_1^n \JJ_t$.

If $\gth_1>0$ (\ie, $\gl>1$), then \refT{T1} shows that \whp\
$d(t)>0$ in the interval $(1,n\gth_1)$, except possibly close to the endpoints.
Thus the DFS will find one giant tree of order $\approx \gth_1 n$, possibly
preceded by a few small trees, and, as we will see later in the proof, 
followed by many
small trees. To obtain a precise estimate, we note that there exists a
constant $c>0$ such that $\tl(\gth)\ge\min\set{c\gth,c(\gth_1-\gth)}$ for
$\gth\in[0,\gth_1]$. 
Hence, 
if $t\le n\gth_1$ and $d(t)=0$, 
then $\td(t)\le d(t)=0$ by \eqref{dtd} and,
recalling \eqref{m*},  
\begin{align}
\Mx\ge n\tl(t/n) \ge c\min\set{t,n\gth_1-t}.  
\end{align}
Consequently, $d(t)=0$ with $t\le n\gth_1$ implies 
$t\in[1,c\qw \Mx] \cup [n\gth_1-c\qw \Mx,n\gth_1]$.
The number of such $t$ is thus $O(\Mx+1)=\Ollnqq$,
using \eqref{m*}.

Let $T_1:=\ceil{n\gth_1}$.
We have just shown that (the case $\gth_1=0$ is trivial)
\begin{align}\label{mma}
  \sum_{t=1}^{T_1-1} \JJ_t = \Ollnqq.
\end{align}

It remains to consider $t\ge T_1$.
For any integer $k\ge0$, the conditional distribution of $\xi_t-k$ given
$\xi_t\ge k$ equals the distribution of $\xi_t$. Hence,
recalling \eqref{mut},
\begin{align}\label{erika}
  \E\bigsqpar{(\xi_t-k)^+}
=\E\bigsqpar{\xi_t-k\mid\xi_t\ge k}\PP(\xi_t\ge k)
=\mu_t\PP(\xi_t-k\ge0).
\end{align}
We use again the stochastic recursion \eqref{dt+}.
Let $\cF_t$ be the $\gs$-field generated by $\xi_1,\dots,\xi_{t-1}$.
Then $d(t)$ is $\cF_t$-measurable, while $\xi_t$ is independent of $\cF_t$.
Hence, \eqref{dt+} and \eqref{erika} yield
\begin{align}\label{ele}
  \E\bigsqpar{d(t+1)\mid\cF_t}&
= \E\bigsqpar{d(t)+1-\xi_t\mid\cF_t}
+ \E\bigsqpar{\xpar{\xi_t-1-d(t)}^+\mid\cF_t}
\notag\\&
=d(t)+1-\mu_t+\mu_t\PP\bigsqpar{\xi_t-1-d(t)\ge0\mid\cF_t}
\notag\\&
=d(t)+1-\mu_t+\mu_t\PP\bigsqpar{d(t+1)=0\mid\cF_t}
\notag\\&
=d(t)+1-\mu_t+\mu_t\E\bigsqpar{\JJ_{t+1}\mid\cF_t}
.\end{align}
We write $\gD d(t):=d(t+1)-d(t)$ and $\bJ_t:=1-\JJ_t$. Then \eqref{ele} yields
\begin{align}\label{win}
  \E\bigsqpar{\gD d(t) -1 + \mu_t\bJ_{t+1}\mid\cF_t}=0.
\end{align}
Define
\begin{align}\label{cM}
  \cM_t:
=\sum_{i=1}^{t-1}\mu_i\qw\bigpar{\gD d(i)-1+\mu_i\bJ_{i+1}}
=\sum_{i=1}^{t-1}\Bigpar{\mu_i\qw\gD d(i)-\mu_i\qw+\bJ_{i+1}}
.\end{align}
Then $\cM_t$ is $\cF_t$-measurable, and \eqref{win} shows that $\cM_t$ is a
martingale.
We have, with $\gD \cM_t:=\cM_{t+1}-\cM_t$, using \eqref{dt+},
\begin{align}\label{cm1}
  |\gD\cM_t|\le\mu_t\qw\bigabs{d(t+1)-d(t)-1}+\bJ_{t+1}
\le\mu_t\qw\xi_t+1,
\end{align}
and thus, since $\pi_t\le p<1$ for all $t$ by \eqref{xipi},
\begin{align}\label{cm2}
  \E\abs{\gD\cM_t}^2
\le 2\mu_t\qww\E\xi_t^2+2
=2\Bigpar{\frac{\pi_t}{1-\pi_t}}^2\frac{1-\pi_t+(1-\pi_t)^2}{\pi_t^2}+2=O(1).
\end{align}
Hence, uniformly for all $T\le n$,
\begin{align}\label{cm3}
  \E \cM_T^2
=\sum_{t=1}^{T-1}\E|\gD\cM_t|^2=O(T)=O(n).
\end{align}

The definition \eqref{cM} yields
\begin{align}  \label{cm4}
\cM_n-\cM_{T_1}&
=
\sum_{t=T_1}^{n-1}\mu_t\qw\gD d(t) - \sum_{t=T_1}^{n-1}\mu_t\qw
+\sum_{t=T_1}^{n-1}\bJ_{t+1}
.\end{align}
By a summation by parts, and 
interpreting $\mu_n\qw:=0$,
\begin{align}  \label{cm5}
\sum_{t=T_1}^{n-1}\mu_t\qw\gD d(t)
=\sum_{t=T_1+1}^{n}\bigpar{\mu_{t-1}\qw-\mu_t\qw} d(t) -\mu_{T_1}\qw d(T_1)
.\end{align}
As $t$ increases,  $\mu_t$ increases by \eqref{mut}, and thus 
$\mu_{t-1}\qw-\mu_t\qw>0$. Hence, \eqref{cm5} implies
\begin{align}  \label{cm6}
\Bigabs{\sum_{t=T_1}^{n-1}\mu_t\qw\gD d(t)}
&
\le\sum_{t=T_1+1}^{n}\bigpar{\mu_{t-1}\qw-\mu_t\qw}\sup_{i> T_1}|d(t)| 
+\mu_{T_1}\qw |d(T_1)|
\le 2 \mu_{T_1}\qw \sup_{i\ge T_1}|d(t)| 
\notag\\&
=\Ollnqq
\end{align}
by \eqref{t1}, since $\tlp(t/n)=0$ for $t\ge T_1\ge n\gth_1$.
Furthermore, \eqref{cm3} shows that $\cM_n,\cM_{T_1}=\Ollnqq$.
Hence, \eqref{cm4} yields, using \eqref{mut},
\begin{align}\label{cm7}
  \sum_{t=T_1+1}^n \JJ_t&
= n-T_1 -  \sum_{t=T_1+1}^n \bJ_t
=n-T_1-  \sum_{t=T_1}^{n-1}\mu_t\qw + \Ollnqq
\notag\\&
=n-T_1-  \sum_{t=T_1}^{n-1}\gl(1-t/n) + \Ollnqq
=n\psi+\Ollnqq,
\end{align}
where
\begin{align}\label{tau=}
  \psi:=
1-\gth_1 -\int_{\gth_1}^1\gl{(1-\xtau)}\dd\xtau
=1-\gth_1-\frac{\gl}{2}(1-\gth_1)^2.
\end{align}
The result follows by \eqref{cm7} and \eqref{mma}.
\end{proof}

The arguments in the proof of \refT{TT} show
that in the supercritical case $\gl>1$, the DFS \whp{}
find first possibly a few small trees, then a giant tree containing all
$v_t$ with
$\Ollnqq\le t\le \gth_1n+\Ollnqq$, and then a large number of small trees.
We give some details in the following lemma and theorem.

\begin{lemma}\label{Lroots}
Let $(a,b)$ be a fixed interval with $0\le a<b\le 1$ and $b>\gth_1$.
Then \whp{} there exists a root $v_t$ in the \dff{}  with $t/n\in(a,b)$.
\end{lemma}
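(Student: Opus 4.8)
The plan is to translate the existence of a root into a statement about record lows of the walk $\td$, and then to read that off from the deterministic profile $n\tl(t/n)$ using the error bound of \refT{T1}. By \refL{Ldtd}, the vertex $v_t$ is a root, i.e.\ $d(t)=0$, exactly when $\td(t)\le\min_{1\le j\le t-1}\td(j)$ (for $t\ge2$); so it is enough to exhibit, \whp, some $t$ with $t/n\in(a,b)$ at which $\td$ weakly breaks its previous minimum. The underlying reason this must happen is that on $(\gth_1,1)$ the profile $\tl$ is strictly decreasing with slope bounded away from zero, so over a fixed sub-window $\td(t)$ is forced below every value it took earlier, whereas by \eqref{m*} the deviation $\td(t)-n\tl(t/n)$ is only $\Ollnqq=\op(n)$.

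Concretely: since $a<b$ and $\gth_1<b$, the interval $(\max\set{a,\gth_1},b)$ is nonempty, so I would fix constants $\gth',\gth''$ with $\max\set{a,\gth_1}<\gth'<\gth''<b$ and then $\gthx\in(\gth'',1)$, and set $T':=\floor{n\gth'}+1$, $T'':=\ceil{n\gth''}-1$. For $n$ large we have $T'<T''\le\floor{n\gthx}$ and $t/n\in(\gth',\gth'')\subseteq(a,b)$ for all $t\in[T',T'']$, and \eqref{m*} (valid for any fixed $\gthx<1$) gives $\Mx:=\max_{t\le\floor{n\gthx}}\bigabs{\td(t)-n\tl(t/n)}=\Ollnqq$. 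Since $\gth'>\gth_1\ge\gth_0$, the function $\tl$ is strictly decreasing on $[\gth_0,1)$ and $\tl\ge0=\tl(0)$ on $[0,\gth_0]$, whence $\min_{0\le\gth\le\gth'}\tl(\gth)=\tl(\gth')<0$, while $\gd:=\tl(\gth')-\tl(\gth'')>0$ is a fixed constant.

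Combining these, the definition of $\Mx$ and the Lipschitz continuity of $\tl$ on $[0,\gthx]$ yield, with a deterministic $O(1)$ error,
\begin{equation*}
\min_{1\le j\le T'-1}\td(j)\ge n\tl(\gth')-\Mx
\quad\text{and}\quad
\td(T'')\le n\tl(\gth'')+\Mx+O(1),
\end{equation*}
so $\min_{1\le j\le T'-1}\td(j)-\td(T'')\ge n\gd-2\Mx-O(1)$. As $\gd>0$ is fixed and $\Mx=\op(n)$, \whp{} $\td(T'')<\min_{1\le j\le T'-1}\td(j)$; on that event I would take $t$ to be the least index in $[T',T'']$ with $\td(t)\le\min_{1\le j\le T'-1}\td(j)$, and observe that then $\td(j)>\td(t)$ for $T'\le j<t$ while $\td(j)\ge\td(t)$ for $j\le T'-1$, so $\td(t)=\min_{1\le j\le t}\td(j)$ and $d(t)=0$ by \refL{Ldtd}. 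Thus \whp{} $v_t$ is a root with $t/n\in(\gth',\gth'')\subseteq(a,b)$.

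The only point needing a little care is the bookkeeping of the $O(1)$ terms (floors and ceilings, and replacing $\tl(T''/n)$ by $\tl(\gth'')$) and checking $\min_{0\le\gth\le\gth'}\tl(\gth)=\tl(\gth')$ in all three regimes for $\gl$; there is no real obstacle, as the whole argument rests on the fixed linear gap $n\gd$ dominating the $\op(n)$ fluctuation $\Mx$. (An alternative would be to rerun the martingale estimate from the proof of \refT{TT} on the window $[\ceil{an},\floor{bn}]$ and conclude it contains $\ge cn$ roots for some $c>0$, but the record-low argument above is shorter.)
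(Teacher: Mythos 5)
Your argument is correct, and it takes a genuinely different route from the paper. The paper's proof reduces the claim to the root-counting machinery from the proof of \refT{TT}: after shrinking the interval so that $\gth_1<a$, it reruns the martingale estimate that produced \eqref{cm7} on the window $[\ceil{an},\floor{bn}]$ and concludes that the number of roots there is $cn+\Ollnqq$ with $c>(b-a)(1-\gl(1-a))>0$ --- i.e.\ it proves the stronger statement that the interval \whp{} contains linearly many roots (this is exactly the ``alternative'' you mention in your last sentence). Your record-low argument instead uses only \refL{Ldtd} and the uniform deviation bound \eqref{m*} from the proof of \refT{T1}: since $\tl$ drops by a fixed amount $\gd>0$ across $[\gth',\gth'']$ and attains its running minimum at the right endpoint of any initial segment ending beyond $\gth_1$, the walk $\td$ must \whp{} set a new record low inside the window, and the first such time is a root. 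Your version is more self-contained (no need for the martingale $\cM_t$ and the summation by parts of \eqref{cm4}--\eqref{cm6}), at the price of yielding only the existence of one root rather than a linear count; for the application in \refT{TT2}, where the lemma is applied to a finite cover of $[\gth_1,1]$ by short intervals, existence of a single root per interval is all that is needed, so your proof suffices. The case analysis you flag (checking $\min_{0\le\gth\le\gth'}\tl(\gth)=\tl(\gth')$ for all $\gl$) indeed goes through: for $\gl\le1$ the function $\tl$ is decreasing on all of $(0,1)$, and for $\gl>1$ it is nonnegative on $[0,\gth_1]$ and strictly decreasing past $\gth_0$, so the minimum over $[0,\gth']$ sits at $\gth'$ whenever $\gth'>\gth_1$.
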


\begin{proof}
  By increasing $a$,
we may assume that
$\gth_1<a<b\le1$.
Then, cf.\ \eqref{tl}, 
$\tl'(a)<\tl'(\gth_1)\le0$ and thus $ \gl(1-a)<1$.
Hence, the  argument yielding \eqref{cm7} in the proof of \refT{TT} 
yields also
\begin{align}\label{cm7ab}
  \sum_{t=\ceil{an}}^{\floor{bn}} \JJ_t&
=bn-an-  \sum_{t=\ceil{an}}^{\floor{bn}}\gl(1-t/n) + \Ollnqq
=cn+\Ollnqq,
\end{align}
with $c>(b-a)(1-\gl(1-a))>0$.
Hence, \whp{} there are many roots $v_t$ with $t\in(an,bn)$.
\end{proof}

\begin{theorem}\label{TT2}
Let $\TTx_1$ be the largest tree in the depth-first forest.
\begin{romanenumerate}
  \item\label{TT2<=} 
If\/ $\gl\le1$, then $|\TTx_1|=\op(n)$.
  \item \label{TT2>}
If\/ $\gl>1$, then 
$|\TTx_1|=\gth_1 n+\Ollnqq$.
Furthermore, the second largest tree has order $|\TTx_2|=\op(n)$.
\end{romanenumerate}
\end{theorem}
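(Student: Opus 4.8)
The plan is to reuse the coding of the depth-first forest by the walk $\td$. By \refL{Ldtd}, $v_t$ is a root exactly when $\td(t)=\min_{1\le j\le t}\td(j)$, and the tree rooted at a root $v_s$ consists of $v_s,\dots,v_{t-1}$, where $t$ is the first index after $s$ with $\td(t)\le\td(s)$ (so that $\td(j)>\td(s)$ for $s<j<t$). Hence the tree sizes are precisely the lengths of the successive excursions of $\td$ strictly above its running minimum, and the whole statement reduces to bounding those lengths.

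\emph{Case (i), $\gl\le1$.} Here $\tl$ is strictly decreasing on $(0,1)$, so roots occur throughout. Fix $\gd>0$; since $\gth_1=0$, \refL{Lroots} applies to each of the finitely many length-$\gd$ intervals covering $[\gd,1)$, and a union bound shows that \whp{} each of them contains a root. A tree whose root is $v_s$ with $s<\gd n$ then ends before the root in $(\gd n,2\gd n)$, and a tree whose root lies in the $k$th interval ends before the root in the next interval (or in any case before $v_n$), so \whp{} every tree has size at most $2\gd n$. Letting $\gd\to0$ gives $|\TTx_1|=\op(n)$.

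\emph{Case (ii), $\gl>1$.} Fix $\gthx<1$ close to $1$, so that in particular $\gthx>\gth_1$ and $1-\gthx<\gth_1$, and let $\Tx:=\floor{n\gthx}$, $\Mx:=\max_{t\le\Tx}\bigabs{\td(t)-n\tl(t/n)}=\Ollnqq$ and $T_1:=\ceil{n\gth_1}$, as in the proofs of \refT{T1} and \refT{TT}. As shown there, every root $v_t$ with $t\le n\gth_1$ satisfies $t\le C\Mx$ or $t\ge n\gth_1-C\Mx$, for a suitable constant $C$. Since $\tl$ is concave with $\tl(\gth_1)=0$ and $\tl'(\gth_1)<\tl'(\gth_0)=0$ (as $\tl'$ is strictly decreasing and $\gth_1>\gth_0$), we have $\tl(\gth)\le-c(\gth-\gth_1)$ on $[\gth_1,1)$ with $c:=-\tl'(\gth_1)>0$. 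Combining this with $\bigabs{\td(t)-n\tl(t/n)}\le\Mx$ for $t\le\Tx$ shows that $\td(t)$ drops below $\min_{j\le T_1}\td(j)$ within $O(\Mx)$ steps after $T_1$, so there is a root in $(t_0,T_1+O(\Mx)]$, where $v_{t_0}$ denotes the last root with $t_0\le\tfrac12 n\gth_1$ (it exists, since $v_1$ is a root). Together with the structural fact above, this gives $t_0=\Ollnqq$ and, for the first root $v_{t_1}$ after $t_0$, $t_1=n\gth_1+\Ollnqq$; hence the tree $\bT$ rooted at $v_{t_0}$ has size $t_1-t_0=\gth_1 n+\Ollnqq$.

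It remains to see that $\bT$ strictly exceeds every other tree in size. A tree rooted at $v_s$ with $s\le t_0$ has size at most $t_0=\Ollnqq$. A tree rooted at $v_s$ with $s\ge t_1$ and $s\le\Tx-O(\Mx)$ has, by the same ``$\td$ falls fast'' estimate, size $O(\Mx)=\Ollnqq$; if instead $s>\Tx-O(\Mx)$, its size is at most $n-s\le(1-\gthx)n+O(\Mx)$, which is $<\eps n$ once $\gthx$ is close enough to $1$. Thus on the event $\{\Mx<c' n\}$ for a sufficiently small constant $c'>0$ — which has probability $1-O(1/n)$, since the second moment of $\Mx$ is $O(n)$ — the tree $\bT$ is strictly the largest, so $|\TTx_1|=|\bT|=\gth_1 n+\Ollnqq$; on the complementary $O(1/n)$-probability event we merely use $|\TTx_1|\le n$, which adds $O(n)$ to $\E\bigsqpar{(|\TTx_1|-\gth_1 n)^2}$ and so does not spoil the $\Oll$ bound. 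The same dichotomy with $\gthx>1-\eps$ shows the second largest tree has size $<\eps n$ \whp{} for every $\eps>0$, i.e.\ $|\TTx_2|=\op(n)$. The delicate point throughout is the $L^2$ (not merely ``\whp'') precision demanded in \textup{(ii)}: it forces one both to locate the first root after $\bT$ to within $\Ollnqq$ of $n\gth_1$ (the role of the concavity bound on $\tl$) and to handle the region $t\to n$, where $\Var\xi_t\to\infty$, by the truncation already used in \refT{T1}, as well as the rare events where $\Mx$ is far larger than $n\qq$.
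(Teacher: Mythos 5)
Your proof is correct and follows essentially the same route as the paper's: the excursion decomposition of $\td$ from \refL{Ldtd}, the root-location estimates from the proof of \refT{TT}, the covering argument via \refL{Lroots}, and the linear decay of $\tl$ beyond $\gth_1$ to bound the end of the giant excursion. If anything, you are more explicit than the paper in converting the w.h.p.\ identification of the largest tree into the claimed $L^2$ bound (via $\PP(\Mx\ge c'n)=O(1/n)$ and the trivial bound $|\TTx_1|\le n$ on the exceptional event), a point the paper passes over quickly.
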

\begin{proof}
Let $\eps>0$.
By covering $[\gth_1,1]$ with a finite number of 
intervals of length $<\eps/2$, it follows from \refL{Lroots} that
\whp{} every tree $\TTx$ having a root $v_t$ with $t>(\gth_1-\eps/2)n$
has $|\TTx|\le\eps n$.

In particular, if $\gl\le1$, so $\gth_1=0$, this applies to all trees, and
thus \whp{} $|\TTx_1|\le\eps n$, which proves \ref{TT2<=}. 

Suppose now $\gl>1$.
  Consider the tree $\TTx$ in the \dff{} that contains $v_{\floor{n\gth_0}}$,
denote its root by $v_r$ and let $v_s$ be its last vertex.
By the proof of \refT{TT}, $d(t)>0$ for $\Ollnqq \le t\le \gth_1 n-\Ollnqq$,
and thus $r=\Ollnqq$ and $s\ge \gth_1n-\Ollnqq$.

On the other hand, let $\gthx\in(\gth_1,1)$. 
If $s\ge\gth_1n$, let $u:=\min\set{s,\floor{\gthx n}}$.
Since $r/n\le\gth_0$, we have $\tl(r/n)\ge0$.
Furthermore, \eqref{dtd} implies that  
$\td(t)>\min_{j\le t}\td(j)=\td(r)$ for $t\in(r,s]$, 
and thus $\td(u)>\td(r)$. 
Hence, by \eqref{m*},
\begin{align}\label{tt2a}
  -n\tl(u/n) 
\le n\tl(r/n)-n\tl(u/n) 
\le 2\Mx + \td(r)-\td(u)
\le 2\Mx
=\Ollnqq.
\end{align}
Since $\tl(\gth_1)=0$ and $\tl'(\gth)\le -c<0$ for $\gth\ge\gth_1$,
it follows that $u\le \gth_1n+\Ollnqq$, and thus
$s\le \gth_1n+\Ollnqq$. 

Consequently, $s= \gth_1n+\Ollnqq$, and thus $|\TTx|=s-r+1=\gth_1n+\Ollnqq$.
Furthermore, any tree found before $\TTx$ has order $\le r=\op(n)$.
The first part of the proof now shows that for every $\eps>0$, there
is \whp{} no tree other than $|\TTx|$ of order $>\eps n$.
Hence, \whp{} $\TTx$ is the largest tree, and \ref{TT2>} follows.
\end{proof}

\begin{remark}\label{RGW}
As said in \refR{Rgth1},  $\gth_1$,
the asymptotic fraction of vertices in the  giant tree, 
equals the survival probability of a Galton--Watson process with $\Po(\gl)$
offspring distribution.
Heuristically, this may be explained by the following argument, well known
from similar situations. Start at a random vertex
and follow the arcs backwards. 
The indegree of a given vertex is asymptotically $\Po(\gl)$, and 
the process of exploring backwards from a vertex may be approximated by
a Galton--Watson process with this offspring distribution.
Hence, the probability of a ``large'' backwards process converges to the
survival probability $\gth_1$. 
It seems reasonable that most vertices in the giant tree have a large
backwards process, while most vertices outside the giant have a small
backwards process.

Note also that the asymptotic size of the giant tree thus equals the
asymptotic size of the giant component in an undirected Erd{\H o}s--R{\'e}nyi
random graph $G(n,\gl/n)$, which heuristically is given by the same argument.
(See also \refR{RER} and \cite{SJ364-general}.) 
\end{remark}

\subsection{Types of arcs}\label{SSGeoTypes}

Recall from the introduction the classification of the arcs in the digraph $G$. 
Since we assume that the outdegrees are $\Ge(1-p)$ and independent, the total
number of arcs, $M$ say, has a negative binomial distribution
with mean $\gl n$,
and, by a weak version of the law of large numbers, 
\begin{align}\label{sw}
  M=\gl n + \Ollnqq.
\end{align}

In the following theorem, we give the asymptotics of the number of
arcs of each type. 

\begin{theorem}\label{TGeoA}
  Let $L$, $T$, $B$, $F$ and $C$ be the numbers of loops, tree arcs, back
  arcs, forward arcs, and cross arcs in the random digraph.
Then
\begin{align}
L&=  \Oll(1), \label{tgeoal}
\\
T &= \chit n + \Ollnqq, \label{tgeoat}
\\
 B&= \gb n + \Ollnqq, \label{tgeoab}
\\
 F&=\gf n + \Ollnqq,\label{tgeoaf}
\\
 C &= \chix n + \Ollnqq, \label{tgeoac}
\end{align}
where
\begin{align}
\label{ggeot}
\chit&:=
\chix:=1-\psi 
=\gth_1+\frac{\gl}{2}(1-\gth_1)^2,
\\\label{ggeob}
  \gb &:= 
\gf:=\gl\ga 
=(\gl-1)\gth_1-\frac{\gl}{2}\gth_1^2
.\end{align}
\end{theorem}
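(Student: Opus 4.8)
The key identities to establish are $T=\chit n+\Ollnqq$ with $\chit=1-\psi$, $L=\Oll(1)$, and $B=F=\gb n+\Ollnqq$ with $\gb=\gl\ga$; once these are known, $C=M-L-T-B-F$ gives \eqref{tgeoac} and the value $\chix=\chit$ immediately from \eqref{sw}, \eqref{psi} and \eqref{ga}. So the real work is on $T$, $L$ and the pair $B,F$.

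\emph{Tree arcs.} Every non-root vertex $v_t$ is reached by exactly one tree arc, and the roots are counted by $N$ in \refT{TT}. Hence $T=n-N=n(1-\psi)+\Ollnqq=\chit n+\Ollnqq$, and the formula \eqref{ggeot} follows from \eqref{psi} together with $\psi=1-\gth_1-\frac{\gl}{2}(1-\gth_1)^2$.

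\emph{Loops.} A loop occurs when, among the $\Ge(1-p)$ outgoing arcs of a given vertex, one points back to that same vertex, each arc doing so with probability $1/n$ independently. The expected number of loops at one vertex is $\gl/n$, so $\E L=\gl$; a short second-moment computation (the per-vertex loop counts are independent, each with bounded second moment of order $1/n$) gives $\Var L=O(1)$, hence $L=\Oll(1)$.

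\emph{Back arcs and forward arcs.} Here I would work directly within the construction of \refS{SGeo}. Consider the DFS at the moment vertex $v_t$ is current. A back arc out of $v_t$ is an arc from $v_t$ to one of its $d(t)$ proper ancestors; a forward arc is an arc from $v_t$ to a strict descendant of $v_t$ that has already been discovered. Because endpoints are uniform and independent, and because of the lack-of-memory property used repeatedly in \eqref{new}--\eqref{iv}, the number of such arcs can be written as a sum over $t$ of contributions whose conditional expectations given $\cF_t$ are explicit functions of $d(t)$ and $t/n$. Concretely: each time $v_t$'s outgoing-arc stream produces an arc pointing to an already-visited vertex, that vertex is uniform among the $\approx t$ visited vertices, so it is an ancestor with probability $\approx d(t)/t$ and, symmetrically, the forward-arc count has a matching conditional mean. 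Integrating $\E[d(t)]=n\tlp(t/n)+\Oll(n\qq)$ (\refT{T1}) against the appropriate kernel and using $\int_0^1\tlp=\ga$ (\refC{CA}) yields $\E B=\E F=\gl\ga\, n+O(\sqrt n)$. The constant $\gl\ga$ expands to $(\gl-1)\gth_1-\frac{\gl}{2}\gth_1^2$ by \eqref{ga}, giving \eqref{ggeob}. For the $L^2$ error one writes $B$ (and $F$) as a martingale-type sum with bounded increment second moments, exactly as in the proof of \refT{TT} via \eqref{cm1}--\eqref{cm3}, so that centered fluctuations are $\Oll(n\qq)$; combined with the expectation estimate this gives \eqref{tgeoab}--\eqref{tgeoaf}.

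\emph{Cross arcs, and the identity $\chit=\chix$.} Finally $C=M-L-T-B-F$. By \eqref{sw}, $M=\gl n+\Ollnqq$; substituting the four estimates above and $\gl\ga=\gb$ gives $C=(\gl-\chit-2\gb)n+\Ollnqq$. A direct check using $\chit=1-\psi$, $\psi=1-\gth_1-\frac{\gl}{2}(1-\gth_1)^2$, $\gb=(\gl-1)\gth_1-\frac{\gl}{2}\gth_1^2$ and the defining relation $\log(1-\gth_1)=-\gl\gth_1$ (equivalently $1-\gth_1=e^{-\gl\gth_1}$) from \eqref{gth1} shows $\gl-\chit-2\gb=\chit$, i.e.\ $\chix=\chit$, completing \eqref{tgeoac} and \eqref{ggeot}.

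\emph{Main obstacle.} The delicate point is the bookkeeping for $B$ and $F$: one must be careful about \emph{which} arcs are ``future'' at the moment $v_t$ is current versus revealed later when the DFS backtracks through $v_t$, and about the fact that forward arcs from $v_t$ can only point to descendants discovered \emph{after} $v_t$ but \emph{before} the DFS leaves the subtree of $v_t$. The clean way around this is to charge each arc to the vertex it \emph{leaves} and to reveal arcs in the lack-of-memory order already used for the depth chain, so that conditionally on $\cF_t$ the relevant endpoint is uniform on the currently visited set; the symmetry $\E B=\E F$ then falls out because summing $d(t)/t$-type ancestor probabilities over $t$ matches summing descendant probabilities over the subtree sizes. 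Verifying that this charging scheme is exact (not just asymptotic) for a geometric outdegree — using the lack-of-memory property as in \eqref{ii}--\eqref{iv} — is where the care is needed; everything after that is the now-standard martingale $L^2$ estimate.
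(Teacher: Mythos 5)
Your treatment of $T$ (via $T=n-N$ and \refT{TT}), of $L$, and of $B$ (conditioning so that $\E[B_t\mid\cF_{t-1}]=\gl d(t)/n$, then a martingale $L^2$ bound combined with \refC{CA}) is essentially the paper's argument. The genuine gap is in $F$. You propose to compute $F$ directly by a ``symmetric'' charging scheme, and you correctly flag this as the delicate point, but you do not resolve it: the claim that ``the forward-arc count has a matching conditional mean'' is precisely what needs proof, and charging a forward arc to the vertex it \emph{leaves} does not produce a conditional mean that is a clean function of $d(t)$ and $t/n$, because whether an arc from $v_t$ to an already-visited vertex is a forward arc (rather than a cross arc) depends on how much of $v_t$'s subtree has been explored at the later, random time that arc is revealed --- not on the state at time $t$. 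The lack-of-memory argument you allude to (each descendant $w$ of $v$, at the moment of its discovery, sees $\Ge(1-p)$ future arcs from $v$, each hitting $w$ with probability $1/n$) is exactly the paper's proof of the \emph{exact} identity $\E B=\E F$ in \refT{TT=C}, but upgrading that to the $L^2$ concentration \eqref{tgeoaf} along your route requires additional work you have not supplied.

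The paper avoids this entirely with one structural observation you are missing: an arc from $v_t$ to a vertex $v_u$ with $u>t$ is necessarily either a tree arc or a forward arc, since any vertex discovered while $v_t$ is still on the DFS stack is a descendant of $v_t$, and all arcs out of $v_t$ are explored while it is on the stack. Hence $T+F=\sum_t\etap_t$, a sum of independent variables with $\etap_t$ conditionally binomial given $\eta_t$, which gives $T+F=\tfrac{\gl}{2}n+\Ollnqq$ immediately; then $F=(T+F)-T$, and likewise $C=(B+C)-B$ from $B+C=\sum_t\etam_t$. Your alternative $C=M-L-T-B-F$ via \eqref{sw} would be fine once $F$ is established, but as written the proof of \eqref{tgeoaf}, and hence of \eqref{tgeoac}, is incomplete.
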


\begin{proof}
Let $\eta_t$ be the number of arcs from $v_t$, and let
$\etam_t,\etal_t,\etap_t$ be the numbers of these arcs that lead to some
$v_u$ with $u<t$, $u=t$ and $u>t$, respectively.
Then 
\begin{align}\label{ta1}
  L=\sumtn\etal_t.
\end{align}
%
Furthermore, an arc $v_tv_u$ with $u>t$ is either a tree arc or a forward
arc; conversely, every tree arc or forward arc is of this type.
Consequently,
\begin{align}\label{tatf}
  T+F&=\sumtn\etap_t.
\end{align}
Similarly, or by \eqref{ta1} and \eqref{tatf},
\begin{align}\label{tabc}
B+C&=\sumtn\etam_t.
\end{align}
Conditioned on $\eta_t$, $\etam_t$ has a binomial distribution
$\Bin(\eta_t,(t-1)/n)$,
since each arc has probability $(t-1)/n$ to go to a vertex $v_u$ with $u<t$.
In general, if $X\sim\Bin(m,p)$, then $\E X = mp$ and
$\E X^2=\Var X + (\E X)^2=mp(1-p)+(m p)^2$. Hence,
by first conditioning on $\eta_t$,
\begin{align}\label{ta2}
  \E\etam_t &= \E\Bigsqpar{\eta_t\frac{t-1}n}
=\gl \frac{t-1}n,
\\\label{ta3}
\Var\etam_t &\le \E \bigsqpar{(\etam_t)^2}
=\E \Bigsqpar{\eta_t \frac{t-1}{n}\Bigpar{1-\frac{t-1}n}
+\eta_t^2\Bigpar{\frac{t-1}{n}}^2}
= O(1).
\end{align}
Furthermore, the random variables $\etam_t$, $t=1,\dots,n$, are independent.
Hence, \eqref{tabc} yields
\begin{align}\label{tabc1}
  \E B + \E C &= \sumtn\E\etam_t = \gl\sumtn\frac{t-1}{n} 
=\frac{\gl}2(n-1),
\\\label{tabc2}
\Var\sqpar{B + C}& = \sumtn \Var\etam_t = O(n),
\end{align}
and thus
\begin{align}\label{tabc3}
  B+C 
=\E\sqpar{B+C}+\Ollnqq
= \frac{\gl}2n+\Ollnqq.
\end{align}

The same argument with \eqref{tatf} yields
\begin{align}\label{tatf1}
  \E T + \E F & 
=\frac{\gl}2(n-1),
\\\label{tatf2}
\Var\sqpar{F + T}&  
= O(n),
\\\label{tatf3}
  F+T &= \frac{\gl}2n+\Ollnqq.
\end{align}
Similarly, conditioned on $\eta_t$, we have $\etal\sim\Bin(\eta_t,1/n)$,
and we find 
\begin{align}\label{tal1}
  \E L &= \sumtn\E\etal_t = n\gl\frac{1}{n}
=\gl,
\\\label{tal2}
\Var{L}& = \sumtn \Var\etal_t = O(1),
\\\label{tal3}
  L& = \gl+\Oll(1) = \Oll(1).
\end{align}
This proves \eqref{tgeoal}. We prove \eqref{tgeoat}--\eqref{tgeoac} one by one.

\pfitem{$T$}
In any forest, the number of vertices equals the number of edges + the
number of trees.
Hence, $T=n-N$, where $n$ is the number of trees in the \dff, and thus 
\refT{TT} implies \eqref{tgeoat} with $\tau=1-\psi$ given by \eqref{ggeot}.

\pfitem{$B$}  
Let $B_t$ be the number of back arcs from $v_t$; thus $B=\sum_1^n B_t$.
Let $\cF_t$ be the $\gs$-field generated by all arcs from $v_i$, $i\le t$
(\ie, by the outdegrees $\eta_i$ and the endpoints of all these arcs);
note that this includes complete information on the DFS until $v_{t+1}$ is
found, but also on some further arcs (the future arcs from the ancestors of
$v_{t+1}$).
Then $d(t)$ is $\cF_{t-1}$-measurable and $B_t$ is $\cF_t$-measurable.
Moreover, $\eta_t$ is independent of $\cF_{t-1}$. Thus, conditioned on
$\cF_{t-1}$, we still have $\eta_t\sim\PQ$; we also know $d(t)$, and each arc
from $v_t$ is a back arc with probability $d(t)/n$.
Hence,
$\E\bigsqpar{B_t\mid\cF_{t-1},\eta_t}=\eta_t d(t)/n$, and consequently
\begin{align}\label{tb1}
  \E\bigsqpar{B_t\mid\cF_{t-1}}
=\E\Bigsqpar{\eta_t\frac{d(t)}n\mid\cF_{t-1}}
=\gl\frac{d(t)}n.
\end{align}
Similarly, since, as above, $X\sim\Bin(m,p)$ implies 
$\E X^2=mp(1-p)+(m p)^2$,
\begin{align}\label{tb2}
  \E\bigsqpar{B_t^2\mid\cF_{t-1}}
=\E\Bigsqpar{\eta_t\frac{d(t)}n\Bigpar{1-\frac{d(t)}{n}}
  +\Bigpar{\eta_t\frac{d(t)}n}^2\mid\cF_{t-1}}
=O(1).
\end{align}

Define $\gD Z_t:=B_t-\gl d(t)/n$ and $Z_t:=\sum_1^t \gD Z_i$.
Then \eqref{tb1} shows that $\E\bigsqpar{\gD Z_t\mid\cF_t}=0$, and thus
$(Z_i)_0^n$ is a martingale, with $Z_0=0$. Hence, $\E Z_n=0$.
Furthermore, \eqref{tb1} implies
$\E\bigsqpar{(\gD Z_t)^2}=\E\bigsqpar{(B_t-\E[B_t\mid\cF_{t-1}])^2}
\le \E\bigsqpar{B_t^2}$, 
and thus by \eqref{tb2},
\begin{align}
  \E \bigsqpar{Z_n^2}
= \Var\bigsqpar{Z_n}
=\sum_{t=1}^n \E\bigsqpar{(\gD Z_t)^2} 
\le \sum_{t=1}^n \E\bigsqpar{B_t^2} 
= O(n).
\end{align}
Consequently, $Z_n=\Ollnqq$, and thus
\begin{align}\label{tb4}
  B=
\sum_{t=1}^n B_t 
=Z_n+\sum_{t=1}^n \gl\frac{d(t)}n
=\gl\bd + Z_n
=\gl\bd+\Ollnqq.
\end{align}
Finally, \eqref{tb4} and \refC{CA} yield
\begin{align}\label{babb}
  B=\gl\bd + \Ollnqq
=\gl\ga n+\Ollnqq,
\end{align}
which shows \eqref{tgeoab}
with $\gb=\gl\ga$ as in \eqref{ggeob}, recalling \eqref{ga}.

\pfitem{$F$}
 By \eqref{tatf3} and \eqref{tgeoat},
we have \eqref{tgeoaf}
with $\gf=\frac{\gl}2-\chit$, which agrees with \eqref{ggeob} by \eqref{ggeot}
and a simple calculation. In particular, $\gf=\gb$.

\pfitem{$C$}
Similarly,
it follows from \eqref{tabc3} and \eqref{tgeoab} that we have \eqref{tgeoac}
with $\chix:=\gl/2-\gb$.
Since we have found $\gb=\gf$, and we always have $\chit+\gf=\gl/2=\gb+\chix$,
see \eqref{tabc3} and \eqref{tatf3}, 
we thus have $\chix=\chit$, and thus \eqref{ggeot} holds.
\end{proof}

Note that $T+F$ and $B+C$ are asymptotically normal;
this follows immediately from \eqref{tatf} and \eqref{tabc}
by the central limit theorem.

\begin{conjecture}
 All four variables $T,B,F,C$ are (jointly) asymptotically
normal.
\end{conjecture}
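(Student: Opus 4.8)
Since this last statement is only conjectured, I describe a plausible route to a proof rather than a complete argument.

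\emph{Reduction.} Writing $T=n-N$ (as in the proof of \refT{TGeoA}), $F=\sumtn\etap_t-T$ and $C=\sumtn\etam_t-B$, one sees that $(T,B,F,C)$ and $\bigpar{N,\,B,\,\sumtn\etap_t,\,\sumtn\etam_t}$ are related by an affine bijection. Hence, by the Cram\'er--Wold device, the conjecture is equivalent to the assertion that every linear combination $\alpha N+\beta B+\gamma\sumtn\etap_t+\delta\sumtn\etam_t$ is asymptotically normal once centred by its first-order term (by \refT{TT} and \refT{TGeoA} these are $\psi n$, $\gb n$, $\gl/2\cdot n$, $\gl/2\cdot n$) and divided by $n\qq$. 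So it suffices to prove a \emph{joint} central limit theorem for these four quantities.

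\emph{A common filtration and martingale decompositions.} The obstruction to a joint treatment is that the depth chain $d(t)$ was analysed by revealing the out-arcs of a vertex \emph{lazily} (one at a time, using the lack-of-memory property), whereas counting back, forward and cross arcs needs \emph{all} out-arcs of $v_t$. I would work with the filtration $(\cF_k)$ generated by running the DFS micro-step by micro-step: at each step the vertex currently at the top of the stack is either declared finished (probability $1-p$) or acquires one more out-arc, whose endpoint is then revealed uniformly at random. With respect to $(\cF_k)$ the DFS state is a time-inhomogeneous Markov chain, the variable $\xi_t$ of \eqref{dt+}--\eqref{xipi} is still $\Ge(\pi_t)$ given the state just before $v_t$ is processed, and the arcs counted by $\etap_t$, $\etam_t$ and by the back/forward/tree-arc tallies of $v_t$ are produced by the same micro-steps. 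Each of the four quantities then admits a representation ``first-order term $+$ terminal value of an $(\cF_k)$-martingale $+\;\op\bigpar{n\qq}$'': for $\sumtn\etap_t$ and $\sumtn\etam_t$ one just centres the (conditionally independent) summands; for $B$ one uses $B=\gl\bd+Z_n$ with the martingale $Z_n$ from the proof of \refT{TGeoA} and $\bd=\ga n+\Ollnqq$ from \refC{CA}; for $N$ one uses the martingale $\cM_t$ from the proof of \refT{TT} and \eqref{cm4}--\eqref{cm7}, being careful that the boundary term $\mu_{T_1}\qw d(T_1)$ is of exact order $n\qq$ (since $d(T_1)=\Ollnqq$ but not $\op(n\qq)$) and must be retained and expressed through the $\td$-martingale of \refL{LN1} rather than absorbed into an error.

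\emph{The multivariate CLT.} One then applies the vector form of a martingale central limit theorem (\eg{} \cite[Theorem VIII.3.12]{JS}, or the vector analogues of the results used for \refL{LN1}). Two hypotheses must be checked. First, a conditional Lindeberg condition, which follows from uniform $L^{2+\eps}$ bounds on the increments; these extend the $L^2$ bounds already present (\eqref{cm2}, \eqref{tb2}, \eqref{ta3}) since the geometric law has moments of all orders. Second, convergence in probability of the predictable quadratic covariation matrix to a deterministic limit $\gS$: the conditional covariance of the joint increment contributed by $v_t$ is an explicit function of the ``macrostate'', essentially of $t/n$ and $d(t)$, so on the range $t\le\gth_1 n$ one uses \refT{T1} (\ie{} $d(\floor{n\gth})/n\pto\tlp(\gth)$ uniformly) to get convergence to explicit integrals in $\xtau$, while on $t>\gth_1 n$ (all of $[0,n]$ when $\gl\le1$), where $d(t)$ behaves like a positive-recurrent reflected random walk with slowly varying parameters, one cuts the range into blocks of length $o(n)$ but tending to infinity on which $\pi_t$ is essentially constant, replaces block-averages of functions of $d(t)$ by their stationary expectations, and sums, controlling the errors via the geometric mixing of the reflected walk. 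Adding the contribution from the giant tree on $[0,\gth_1)$ in the supercritical case — a linear functional of the Gaussian process $Z$ of \refL{LN1}/\refT{TN1}, which is the common source of the $\bd$-fluctuation inside $B$ and of the boundary term in $N$ — then yields $\gS$, and the conjecture follows. (The cases $\gl\le1$, where $\gb=0$, need separate care, notably the nearly-driftless reflected walk when $\gl=1$.)

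\emph{Main obstacle.} The crux is the quadratic-variation analysis on $t>\gth_1 n$: making the ergodic/renewal averaging for the time-inhomogeneous reflected-walk chain $d(t)$ rigorous, and doing it \emph{jointly} — tracking simultaneously the fluctuations of $N$ (which are generated almost entirely in this regime), the ``microscopic'' parts of $B$, and of $\sumtn\etap_t$ and $\sumtn\etam_t$, together with all their cross-correlations — and then splicing this consistently with the quite different giant-tree mechanism on $[0,\gth_1)$ and its boundary contributions. A more routine but still non-trivial point is re-deriving the $\cM_t$- and $Z_n$-martingale identities and the associated conditional-variance formulas in the micro-step filtration $(\cF_k)$.
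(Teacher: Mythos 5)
The statement you are addressing is stated in the paper only as a conjecture; the authors give no proof, so there is nothing to compare your argument against, and the question is simply whether your proposal settles it. It does not: it is a reasonable research programme whose first step (the affine reduction $T=n-N$, $T+F=\sumtn\etap_t$, $B+C=\sumtn\etam_t$, plus Cram\'er--Wold) is correct and whose general strategy (a common micro-step filtration, martingale decompositions, a multivariate martingale CLT) is consistent with the paper's methods, but the two steps that carry all the difficulty are only described, not carried out. Concretely: (a) the convergence of the predictable quadratic covariation matrix requires the block/ergodic averaging of functionals of the time-inhomogeneous reflected walk $d(t)$ on $t>\gth_1 n$, with quantitative mixing bounds and control of the slowly varying parameter $\pi_t$; nothing in the paper supplies this, and your sketch does not either; (b) the martingale identities \eqref{win}, \eqref{tb1} are derived in the paper in vertex-indexed filtrations that reveal arcs differently for the depth chain and for the arc counts, and re-establishing them, together with the variance bounds, in a single filtration is asserted but not done.

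A further point deserves more caution than you give it. You correctly observe that the boundary term $\mu_{T_1}\qw d(T_1)$ in \eqref{cm5} is of exact order $n\qq$ and must be retained; but $d(T_1)$ is the reflected process at the zero of $\tl$, and its $n\qqw$-rescaled limit is of the form $(Z(\gth_1))^+$ plus corrections, i.e.\ \emph{not} Gaussian on its own. Likewise the count $\sum_{t<T_1}\JJ_t$ of roots before $T_1$ has $n\qq$-order fluctuations driven by the (random) endpoint of the giant tree, again a one-sided functional of $Z(\gth_1)$. Asymptotic normality of $N$ (hence of $T$) therefore hinges on these two non-Gaussian boundary contributions recombining into a linear functional of $Z(\gth_1)$ plus an independent Gaussian; this cancellation is exactly the ``splicing'' you defer to, and it is where a proof could genuinely fail or where the limit law could turn out non-normal. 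Until (a), (b) and this splicing are established, the conjecture remains open.
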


The equalitites $\chit=\chix$ and $\gb=\gf$ mean asymptotic equality of the
corresponding expectations of  numbers of arcs.
In fact, there are exact equalities.

\begin{theorem}\label{TT=C}
  For any $n$,
$\E T = \E C$ and\/ $\E B = \E F=\gl\E\bd$.
\end{theorem}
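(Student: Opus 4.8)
The plan is to prove the two exact identities directly from the structure of the DFS, without passing through the asymptotic estimates. The key observation is that each of the four arc-types has a clean conditional description in terms of the depth process $d(t)$, and that the depths of the relevant endpoints are themselves governed by $d$.

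First I would handle $\E B = \gl\E\bd$. Reusing the setup in the proof of \refT{TGeoA}: with $\cF_{t-1}$ the $\gs$-field of all arcs from $v_1,\dots,v_{t-1}$, each arc out of $v_t$ independently points to a uniformly random vertex, so conditioned on $\cF_{t-1}$ and $\eta_t$ it is a back arc precisely when it lands on one of the $d(t)$ proper ancestors of $v_t$; this has probability $d(t)/n$. Hence $\E[B_t\mid\cF_{t-1}]=\gl d(t)/n$ exactly (using $\E\eta_t=\gl$), and summing over $t$ gives $\E B=\gl\sum_t\E[d(t)]/n=\gl\E\bd$. For $F$, I would symmetrize by running the DFS "backwards in discovery time": a forward arc from $v_t$ is an arc to a strict descendant of $v_t$ that has already been discovered when the arc is examined. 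The cleanest route is a direct counting argument: for each $u$, the arcs into $v_u$ that are forward arcs are exactly the arcs from proper ancestors of $v_u$ that are examined \emph{after} $v_u$ is discovered. But a shorter path uses the identity $T+F=\sum_t\etap_t$ and $\E[\sum_t\etap_t]=\E[\sum_t\eta_t(n-t)/n]=\gl(n-1)/2$ together with $\E T=\E C$; once $\E T=\E C$ is established, $\E F = \gl(n-1)/2 - \E T = \gl(n-1)/2-\E C = \E B$ follows from $\E[\sum_t\etam_t]=\gl(n-1)/2 = \E B+\E C$. So the real content is $\E B = \gl\E\bd$ (done above) and $\E T = \E C$.

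For $\E T = \E C$ I would use $T = n-N$ and compute $\E N$ exactly. Return to equation \eqref{ele}: taking full expectations, $\E d(t+1) = \E d(t)+1-\mu_t+\mu_t\E\JJ_{t+1}$, i.e. $\mu_t\E\JJ_{t+1}=\E d(t+1)-\E d(t)-1+\mu_t$. Summing over $t=1,\dots,n-1$, and using $\E\JJ_1=1$, gives $\E N = \sum_{t=1}^n\E\JJ_t$ in terms of a telescoping sum of $\E d$ and the sums $\sum\mu_t\qw$ — but $\mu_t$ is not constant, so the $\E\JJ_{t+1}$ terms do not telescope cleanly. Instead I would rewrite \eqref{ele} as the statement that each arc out of $v_t$ is a cross arc with probability $(t-1-d(t))/n$ (it must land on an already-discovered vertex, $u<t$, which is not a proper ancestor), so $\E[C_t\mid\cF_{t-1}]=\gl(t-1-d(t))/n$ where $C_t$ is the number of cross arcs from $v_t$; hence
\begin{align*}
\E C = \gl\sum_{t=1}^n \frac{t-1-\E d(t)}{n} = \frac{\gl}{2}(n-1) - \gl\E\bd.
\end{align*}
On the other hand, $\E T = n - \E N$, and I must show $n-\E N = \frac{\gl}{2}(n-1)-\gl\E\bd$, i.e. $\E N = n - \frac{\gl}{2}(n-1)+\gl\E\bd$. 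To get $\E N$ exactly, I would sum the martingale relation from \eqref{cM}: since $\cM_t$ is a martingale with $\cM_1=0$, $\E\cM_n=0$, which gives $\sum_{i=1}^{n-1}\bigpar{\mu_i\qw\E\gD d(i)-\mu_i\qw+\E\bJ_{i+1}}=0$. Now $\sum_i \mu_i\qw\E\gD d(i)$ is handled by summation by parts exactly as in \eqref{cm5} (no asymptotics needed — it is an identity), and combining with $\sum_i\mu_i\qw = \sum_i \gl(1-i/n)$ from \eqref{mut} and $\sum_i\E\bJ_{i+1}=(n-1)-(\E N-1)=n-\E N$ yields a closed-form expression for $\E N$ that I expect to simplify to exactly $n-\frac{\gl}{2}(n-1)+\gl\E\bd$, matching $\E T$.

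The main obstacle I anticipate is the bookkeeping in the summation-by-parts step: because $\mu_t\qw=\gl(1-t/n)$ is linear in $t$, the weights $\mu_{t-1}\qw-\mu_t\qw=\gl/n$ are constant, so $\sum_{t}(\mu_{t-1}\qw-\mu_t\qw)d(t) = \frac{\gl}{n}\sum_t d(t) = \gl\bd$ up to boundary terms at $t=1$ and $t=n$ (recalling $d(1)=0$ and interpreting $\mu_n\qw=0$). Once one tracks these boundary terms carefully, the martingale identity $\E\cM_n=0$ collapses to precisely the relation $\E T=\E C$; I do not expect any genuine difficulty beyond keeping the endpoint indices straight. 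A cleaner alternative, which I would mention, is to avoid $\cM_t$ entirely: from the per-arc description, $\E[T_t\mid \cF_{t-1}]$ (the expected number of tree arcs from $v_t$, which is $0$ or $1$ — exactly $\indic{\xi_t \le d(t)}$ in the notation of \eqref{dt+}, i.e. the arc that discovers $v_{t+1}$ when one exists) equals $1-\E[\JJ_{t+1}\mid\cF_{t-1}]$, and comparing $\sum_t\E[T_t\mid\cF_{t-1}]$ with $\sum_t\E[C_t\mid\cF_{t-1}]=\gl\sum_t(t-1-d(t))/n$ reduces the claim to showing $\sum_t\bigpar{1-\E\JJ_{t+1}} = \frac{\gl}{2}(n-1)-\gl\E\bd$, which is again the telescoped form of \eqref{ele} summed over $t$. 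Either way the proof is a short exact computation; I would write whichever version is most transparent.
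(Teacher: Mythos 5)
Your proposal is correct, but it reaches the theorem by a genuinely different route than the paper. The paper's proof is a short pairwise symmetry argument: for an ordered pair $(v,w)$ with $w$ a descendant of $v$, the lack-of-memory property makes the number of future arcs from $v$ at the moment $w$ is discovered, and the number of (all) arcs from $w$, both $\Ge(1-p)$; hence the expected numbers of forward arcs $v\to w$ and back arcs $w\to v$ are each $(\gl/n)\E I_{vw}$, summation gives $\E B=\E F=\gl\E\bd$ directly, and $\E T=\E C$ then falls out of $\E[T+F]=\E[B+C]$. You invert the logic: you first get the distribution-free identities $\E B=\gl\E\bd$ and $\E C=\frac{\gl}{2}(n-1)-\gl\E\bd$ from exact conditional expectations, then pin down $\E T=n-\E N$ exactly via $\E\cM_n=0$ for the martingale \eqref{cM}, and only afterwards deduce $\E F=\E B$. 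Your anticipated simplification does go through: by \eqref{mut}, $\mu_t\qw=\gl(1-t/n)$, so every weight in the summation by parts is exactly $\gl/n$ (including at $t=n$ with the convention $\mu_n\qw=0$, while $d(1)=0$ kills the other boundary term), giving $\sum_{t=1}^{n-1}\mu_t\qw\E\gD d(t)=\gl\E\bd$, $\sum_{t=1}^{n-1}\mu_t\qw=\frac{\gl}{2}(n-1)$, and hence $\E T=\frac{\gl}{2}(n-1)-\gl\E\bd=\E C$. What your route buys is an exact closed form for $\E N$ and a clean separation between the identities that hold for any outdegree distribution and the single place the geometric law enters (the exact relation \eqref{ele}); what it costs is length and dependence on the machinery of \refT{TT}, whereas the paper's argument is self-contained and two lines. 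One small slip in your ``cleaner alternative'' aside: the number of tree arcs out of $v_t$ is not $0$ or $1$ (a vertex may have several children in the forest); what is $0$ or $1$ is $\bJ_{t+1}$, the indicator that $v_{t+1}$ is entered by a tree arc, and that arc need not emanate from $v_t$. The identity you actually rely on, $T=\sum_{t=1}^{n-1}\bJ_{t+1}=n-N$, is of course correct.
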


\begin{proof}
  Let $v,w$ be two distinct vertices.
If the DFS finds $w$ as a descendant of $v$, then there will later
be $\Ge(1-p)$ arcs from $w$, and each has probability $1/n$ of being a back
arc to $v$. Similarly, there will be $\Ge(1-p)$ future arcs from $v$, and each
has probability $1/n$ of being a forward arc to $w$.
Hence, if $I_{vw}$ is the indicator that $w$ is a descendant of $v$, 
and $B_{vw}$ [$F_{vw}$] is the number of back [forward] arcs $vw$, then
\begin{align}
  \E B_{wv} = \E F_{vw} = \frac{\gl}{n} \E I_{vw}.
\end{align}
Summing over all pairs of distinct $v$ and $w$, we obtain
\begin{align}\label{EB=F}
  \E B = \E F = \frac{\gl}{n} \E\sum_{w} \sum_{v\neq w} I_{vw}
= \frac{\gl}{n} \E \sum_{w} d(w)
=\gl\E\bd.
\end{align}
Finally, $\E T+\E F = \E C + \E B$ by \eqref{tatf1} and \eqref{tabc1},
and thus \eqref{EB=F} implies $\E T = \E C$.
\end{proof}

\begin{remark}
Knuth \cite{Knuth12A} conjectures, based on exact calculation of generating
functions for small $n$,
that, much more strongly, 
$B$ and  $F$ have the same distribution for every $n$.
(Note that $T$ and $C$ do not have the same distribution; we have $T\le n-1$,
while $C$ may take arbitrarily large values.)
\end{remark}

\begin{remark}\label{RGeoL}
A simple argument with generating functions shows that the number of loops at
a given vertex $v$ is $\Ge(1-p/(n-np+p))$; these numbers are independent,
and thus
$L\sim\NegBin\bigpar{n,1-p/(n-np+p)}$ with $\E L = p/(1-p)=\gl=O(1)$ and
$\Var(L)=p(1-p+p/n)/(1-p)^2=\gl(1+\gl/n)=O(1)$
\cite{Knuth12A}. 
Moreover, it is easily seen that 
asymptotically, $L$ has a Poisson distribution,
$L\dto\Po(\gl)$ as \ntoo.
\end{remark}

\section{Depth, trees and arc analysis in the shifted 
geometric outdegree  distribution} 
\label{SGeo1}
In this section, the outdegree distribution is $\Gei(1-p)=1+\Ge(1-p)$.
Thus we now have the mean 
\begin{align}\label{glgeo1}
  \gl=\frac{1}{1-p}.
\end{align}
Thus $\gl>1$, and only the supercritical case occurs.
As in \refS{SGeo}, the depth $d(t)$ is a Markov chain given by \eqref{dt+},
but the distribution of $\xi_t$ is now different.
The probability \eqref{new} is replaced by $(1-t/n)/(1-pt/n)$, 
but the number of future arcs from an ancestor is still $\Ge(1-p)$, and, 
with $\gth:=t/n$,
\begin{align}
  \label{xi1}
\PP\bigpar{\xi_t=k}=
  \begin{cases}
    \bpi_t:=\frac{1-\gth}{1-p\gth}, & k=0,\\
(1-\bpi_t)(1-\pi_t)^{k-1}\pi_t,
& k\ge1,
  \end{cases}
\end{align}
where $\pi_t=p\bpi_t$ is as in \eqref{xipi}.
The rest of the analysis does not change,
and the results in \refTs{T1}--\ref{TGeoA} still hold,
but we get different values for many of the constants. 

We now have
\begin{align}\label{bmut}
\E \xi_t = \bmu_t:=\frac{(1-p)\gth}{p(1-\gth)}  
\end{align}
and 
instead of \eqref{Etd} we have
$\E \td(t) = n\tl(\gth) + O(1)$ where now
$\tl(\gth)$ takes the new value
\begin{align}\label{tl1}
    \tl(\gth)&:=
\int_0^\gth\lrpar{1-\frac{(1-p)x}{p(1-x)}  }\dd x
=\frac{1}{p}\theta+\frac{1-p}{p}\log(1-\theta)
\notag\\ &\phantom:
=\frac{1}{p}\bigpar{\theta+\gl\qw\log(1-\theta)}
.\end{align}
Note that $\tl(\gth)$ in \eqref{tl1} is proportional to \eqref{tl} for the 
(unshifted) geometric distribution with the same $\gl$, but larger by a
factor $1/p$.
Figures~\ref{fig:depth} and \ref{fig:depth2} show 
$\tl(\gth)$ for both geometric distributions
with the same $p$ ($0.6$) and the same $\gl$ (2.0), respectively.
\begin{figure}[ht]
\vspace{-1cm}
\includegraphics[height=8cm,trim = 1cm 18cm 0 0]{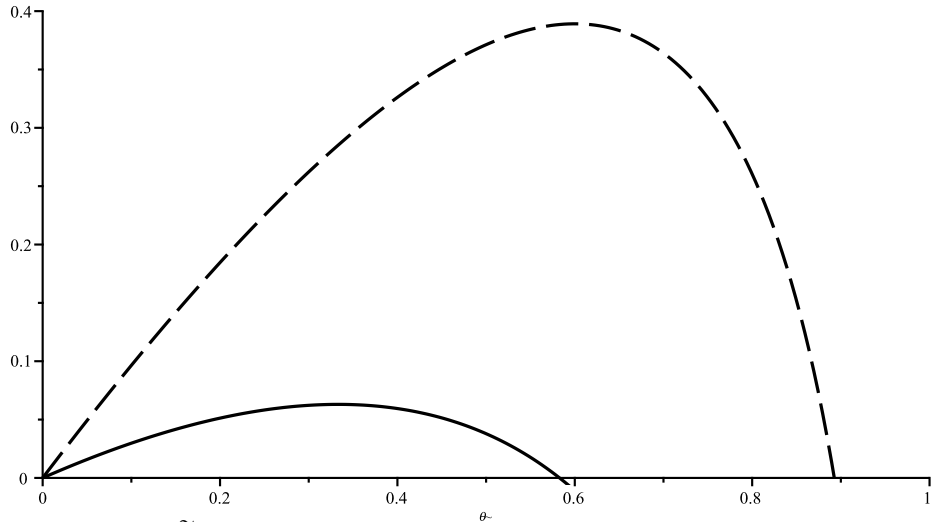}
\caption{$\tl(\gth)$, the asymptotic search depth, 
for geometric distribution $\Ge(1-p)$ (solid) 
and shifted geometric distribution $\Gei(1-p)$ (dashed) 
with $p=0.6$ and thus $\gl=1.5$ and $\gl=2.5$, respectively.}
\label{fig:depth}
\end{figure}

\begin{figure}[ht]
\includegraphics[height=8cm, width=12cm, trim= 3cm 95mm 0cm 0cm]{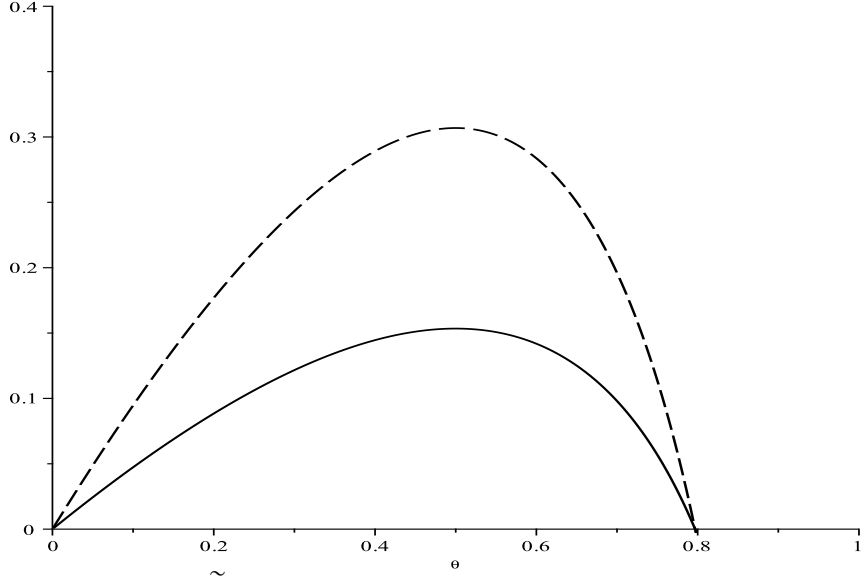}
\caption{$\tl(\gth)$, the asymptotic search depth, 
for geometric distribution $\Ge(1-p)$ (solid) 
and shifted geometric distribution $\Gei(1-p)$ (dashed) 
with $p=2/3$ and $1/2$, respectively, and thus $\gl=2$.}
\label{fig:depth2}
\end{figure}

Note that $\tl(\gth_1)=0$ still gives \eqref{gth1} and \eqref{gth1rhoo},
now with
$\gl=1/(1-p)$ as in \eqref{glgeo1}, and that $\gl>1$ so $\gth_1>0$
for every $p$. 
Differentiating \eqref{tl1} shows that the maximum point $\gth_0=p>0$, which
again is given by \eqref{gth0}.
Straightforward calculations yield
\begin{align}\label{gu1}
  \gu&:=\tl(p)= 1+\frac{1-p}{p}\log(1-p)
=1-\frac{1}{\gl-1}\log\gl,
\\\label{ga1}
\alpha&:=\frac{1}{p}\left(\frac{\theta_1^2}{2}
 -\frac{1}{\lambda}\bigpar{(1-\theta_1)\log(1-\theta_1)+\theta_1}\right)
=\gth_1-\frac{\theta_1^2}{2p}
.\end{align}
Furthermore, \eqref{xi1} yields by a simple calculation, with $\gth:=t/n$,
\begin{align}
  \Var \xi_t 
=\frac{(1-p)^2}{p^2(1-\gth)^2} + \frac{1-p}{p(1-\gth)} - \frac{1-p}{p^2}.
\end{align}
Hence, \eqref{v2} holds with \eqref{v3} replaced by
\begin{align}
  \gssx(\gth)&:=
\int_0^\gth\Bigpar{\frac{(1-p)^2}{p^2(1-x)^2} + \frac{1-p}{p(1-x)} -
  \frac{1-p}{p^2}}\dd x
\notag\\&\phantom:=
\frac{(1-p)^2\gth}{p^2(1-\gth)} -\frac{1-p}{p}\log(1-\gth) -\frac{1-p}{p^2}\gth,
\end{align}
and then  
\refL{LN1} and \refT{TN1} hold with this $\gssx(\gth)$.

Consequently, \refT{TH2} holds with
\begin{align}
  \gss:=\gssx(p)
= -\frac{1-p}{p}\log(1-p).
\end{align}

In the proof of \refT{TT}, \eqref{erika} for $k\ge1$
still holds with $\mu_t$ given by 
\eqref{mut} (except for the formula with $\gl$), and thus
\eqref{ele} is replaced by, using \eqref{bmut},
\begin{align}\label{ele1}
  \E\bigsqpar{d(t+1)\mid\cF_t}&
=d(t)+1-\bmu_t+\mu_t\E\bigsqpar{\JJ_{t+1}\mid\cF_t}
.\end{align}
The rest of the proof remains the same with minor modifications, and leads
to, instead of \eqref{cm7}, with $\gth:=t/n$,
\begin{align}\label{cm71}
  \sum_{t=T_1+1}^n \JJ_t&
=\sum_{t=T_1}^{n-1}\frac{\bmu_t-1}{\mu_t}+\Ollnqq
=\sum_{t=T_1}^{n-1}\frac{\frac{(1-p)\gth}{p(1-\gth)}-1}{\frac{1-p}{p(1-\gth)}}
+\Ollnqq
\notag\\&
=\sum_{t=T_1}^{n-1}\bigpar{1-\gl(1-\gth)} + \Ollnqq
,\end{align}
and thus
\refT{TT} holds with
\begin{align}\label{psi1}
  \psi:=\int_{\gth_1}^1\bigpar{1-\gl(1-x)}\dd x
=1-\gth_1-\frac{\gl}{2}(1-\gth_1)^2,
\end{align}
just as in \eqref{psi}. 

In the proof of \refT{TGeoA},
\eqref{babb} still holds, and we 
obtain \eqref{tgeoab} with $\gb=\gl\ga$, and then \eqref{tgeoac} with
$\chi=\gl/2-\gb$ just as before (but recall that $\ga$ now is different).
On the other hand, 
now the expected numbers of back and  forward arcs differ since 
$\E B = \gl\E\bd \sim\lambda\ga n$ and $\E F=(\gl-1)\E\bd\sim(\lambda-1)\ga n$ 
because the average number of future arcs at a vertex after a descendant
has been created is $\lambda-1$. 
The asymptotic formula \eqref{tgeoat} holds as above with $\tau:=1-\psi$;
hence \eqref{tatf1} implies that \eqref{tgeoaf} holds too, 
with $\gf=\gl/2-\tau$; as just noted, we now have 
$\gf=(\gl-1)\ga\neq\gb$. Collecting these constants, we see that
\refT{TGeoA} holds with
\begin{align}\label{gt1}
  \chit&:=1-\psi=\theta_1+\frac{\gl}{2}(1-\theta_1)^2,
\\
\gb&:=\gl\ga 
= \gl\gth_1-\frac{\gl}{2p}\gth_1^2
= \gl\gth_1-\frac{\gl^2}{2(\gl-1)}\gth_1^2
,\\
\gf&:=(\gl-1)\ga = (\gl-1)\gth_1-\frac{\gl}{2}\gth_1^2
=\frac{\gl}2-\chit, 
\\\label{chi1}
\chi&:=\frac{\gl}2-\gb
=\frac{\gl}2(1-\gth_1)^2 + \frac{\gl}{2(\gl-1)}\gth_1^2
.\end{align}
Thus the equality $\gb=\gf$ and the equality of the expected 
number of back and forward arcs in \refTs{TGeoA} and \refT{TT=C}
was an artefact of the  geometric degree distribution.
Similarly, $\chi=\gl/2-\gb<\gl/2-\gf=\chit$, and the equality
of the expected numbers of tree arcs and cross arcs in \refT{TT=C}
also does not hold.

We summarize the results above.
\begin{theorem}
  \label{TGeo1}
Let the outdegree distribution $\PQ$ be the shifted geometric distribution
$\Gei(p)$ with $p\in(0,1)$.
Then \refTs{T1}--\ref{TGeoA} hold, with the constants now having the values
described above (and always $\gl>1$), while \refT{TT=C} does not hold.
\end{theorem}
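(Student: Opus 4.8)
The plan is to observe that the whole analysis of \refS{SGeo} rested on two structural facts that survive unchanged for the shifted distribution: that $d(t)$ satisfies the recursion \eqref{dt+} with $(\xi_t)$ an independent sequence --- now with the distribution \eqref{xi1} in place of \eqref{xi} --- and that $\Var\xi_t=O(1)$ uniformly for $t/n\le\gthx$ with every fixed $\gthx<1$. Granting this, \refL{Ldtd} and the reduction of $d(t)$ to the partial-sum process $\td(t)$ are literally unaltered. First I would recompute the two deterministic inputs: $\E\td(t)=n\tl(\gth)+O(1)$ with $\tl$ the function \eqref{tl1}, and $\Var\td(t)=n\gssx(\gth)+O(1)$ with the new $\gssx$. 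Then Doob's inequality \eqref{emma}, the martingale/Donsker-type limit theorems, and --- for $\Tx<t\le n$ --- the truncation device (here one couples $\xi_t$ from below by the shifted-geometric-type variable with parameter $\bpi_{\Tx}$, whose variance stays $O(1)$) carry over word for word, so that \refTs{T1}--\ref{TH2} hold with the constants $\gl,\gth_0=p,\gth_1,\gu,\ga,\gss$ now read off from \eqref{glgeo1}, \eqref{tl1}, \eqref{gu1} and \eqref{ga1}.

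The genuine modifications are confined to the proofs of \refT{TT} and \refT{TGeoA}. In \refT{TT}, the point is that $\E\xi_t=\bmu_t$ is no longer equal to the factor $\mu_t$ multiplying $\PP\bigpar{d(t+1)=0}$ in \eqref{erika} --- that factor comes from the future arcs at an ancestor, which remain $\Ge(1-p)$ --- so \eqref{ele} becomes \eqref{ele1}; rerunning the martingale $\cM_t$ of \eqref{cM} with this correction yields \eqref{cm71} and hence $\psi$ as in \eqref{psi1}, numerically the same as \eqref{psi}. The same estimate gives \refL{Lroots}, whence \refT{TT2} with the new $\gth_1$. For \refT{TGeoA}: $T=n-N$ gives \eqref{tgeoat}; the back-arc bound $\E B=\gl\bd+\Ollnqq$ is unchanged since a back arc from $v_t$ occurs with probability $d(t)/n$ per out-arc; and \eqref{tatf3}, \eqref{tabc3}, which use only the binomial splitting of out-arcs by endpoint, give $F$ and $C$. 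The difference from the unshifted case appears for forward arcs: once $v$ has spent an out-arc to create a descendant it has a $\Ge(1-p)$ tail of future arcs, of mean $\gl-1$ rather than $\gl$, so the descendant-counting identity behind \refT{TT=C} now reads $\E F=(\gl-1)\E\bd$ while still $\E B=\gl\E\bd$; since $\gl>1$ these differ, hence $\gb\neq\gf$ and therefore also $\chit=\gl/2-\gf\neq\gl/2-\gb=\chix$, i.e.\ \refT{TT=C} genuinely fails.

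The one step needing an argument rather than a pointer to \refS{SGeo} is this last bookkeeping: one must set up the $\gs$-fields (as in the proof of \refT{TT=C}) so that ``the number of future arcs from $v$ at the moment a descendant of $v$ is created'' is an honest $\Ge(1-p)$ variable, independent of the indicator $I_{vw}$ being summed, so that $\E F=(\gl-1)\E\bd$ is rigorous and not merely heuristic; with the unshifted law this subtlety was invisible because the relevant mean was $\gl$ in both roles. Everything else reduces to re-evaluating elementary integrals, and the error terms remain $\Ollnqq$ for the same reason as before --- each is a sum of $n$ independent $O(1)$ contributions, controlled by the Doob maximal inequality.
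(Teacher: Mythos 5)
Your proposal is correct and follows essentially the same route as the paper: keep the Markov recursion \eqref{dt+} with $\xi_t$ redistributed as in \eqref{xi1}, recompute $\tl$, $\gssx$ and the derived constants, replace \eqref{ele} by \eqref{ele1} in the tree-count martingale (the key point being that the factor $\mu_t$ in \eqref{erika} still comes from the $\Ge(1-p)$ future arcs of ancestors while $\E\xi_t=\bmu_t$ differs), and detect the failure of \refT{TT=C} through $\E F=(\gl-1)\E\bd\neq\gl\E\bd=\E B$. Your flagged subtlety about making the forward-arc identity rigorous is well taken, though for the asymptotic constants the paper (like you) extracts $\gf$ from the exact relation $F+T=\gl n/2+\Ollnqq$ together with $T=n-N$, so the lack-of-memory argument is only needed for the exact expectation statement.
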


\section{A general  outdegree  distribution: stack size}
\label{Sgen}
In this section, we consider a general outdegree distribution $\PQ$, with mean
$\gl$ and finite variance. 

When the outdegree distribution is general, 
the depth does not longer follow a simple Markov chain, since
we would have to keep track of the number of children seen so far
at each level of the branch of the tree toward the current vertex.
Instead we get back a Markov chain if we instead of the depth 
consider the stack size $I(t)$, defined as follows.

The DFS can be regarded as keeping a stack of unexplored arcs, for which we
have seen the start vertex but not the endpoint.
The stack evolves as follows: 

{\addtolength{\leftmargini}{-10pt}
\begin{enumerate}
\renewcommand{\labelenumi}{\theenumi.}
\renewcommand{\theenumi}{\textup{S\arabic{enumi}}}%

\item \label{stack1}
If the stack is empty, pick a new vertex $v$ that has not been seen before
(if there is no such vertex, we have finished).
Otherwise, pop the last arc from the stack, and reveal its endpoint $v$
(which is uniformly random over all vertices).
If $v$ already is seen, repeat.
\item \label{stack2}
($v$ is now a new vertex)
Reveal the outdegree $\meta$ of $v$ and add to the stack $\meta$ new arcs
from $v$,
with unspecified endpoints. GOTO \ref{stack1}.
\end{enumerate}}

Let again $v_t$ be the $t$-th vertex seen by the DFS, and let 
$I(t)$ be the
size of the stack when $v_t$ is found (but before we add the arcs from
$v_t$).
It can easily be seen that $I(t)$ will be a Markov chain, similar (but not
identical) to the depth process $d(t)$ in the geometric case studied above.
Moreover, it is possible to recover the depth of the vertices from the stack
size process, which makes it possible to extend many of the results above,
although sometimes with less precision. 
For details, see \cite{SJ364-general}.

\end{document}